\newcommand{\etal}{et~al.\xspace}
\title{Topological Stability of Kinetic $k$-Centers
\footnote{W. Meulemans and J. Wulms are (partially) supported by the Netherlands eScience Center (NLeSC) under grant number 027.015.G02. K. Verbeek is supported by the Netherlands Organisation for Scientific Research (NWO) under project no.~639.021.541. Research on the topic of this paper was initiated at the 3rd Workshop on Applied Geometric Algorithms (AGA 2017) in Vierhouten, The Netherlands, supported by the Netherlands Organisation for Scientific Research (NWO) under project no. 639.023.208.}
%\footnote{A full version of the paper is available at \cite{arxiv version of paper?}, \url{XXX}}
}
\author[1]{Ivor Hoog v.d.}
\author[1]{Marc van Kreveld}
\affil[1]{Dept. of Information and Computing Sciences, {Utrecht University, The Netherlands}\\
\texttt{[i.d.vanderhoog|m.j.vankreveld]@uu.nl}}
\author[2]{Wouter Meulemans}
\author[2]{Kevin Verbeek}
\author[2]{Jules Wulms}
\affil[2]{Dept. of Mathematics and Computer Science, {TU Eindhoven, The Netherlands}\\
\texttt{[w.meulemans|k.a.b.verbeek|j.j.h.m.wulms]@tue.nl}}
\authorrunning{I. Hoog v.d., M. van Kreveld, W. Meulemans, K. Verbeek and J. Wulms} %mandatory. First: Use abbreviated first/middle names. Second (only in severe cases): Use first author plus 'et. al.'
\subjclass{F.2.2 Nonnumerical Algorithms and Problems: Geometrical problems and computations} % mandatory: Please choose ACM 1998 classifications from http://www.acm.org/about/class/ccs98-html . E.g., cite as "F.1.1 Models of Computation".
\keywords{Stability analysis, time-varying data, mobile facility location}% mandatory: Please provide 1-5 keywords
\DeclareMathOperator{\OPT}{OPT}
\DeclareMathOperator{\TS}{\rho_{TS}}
\theoremstyle{plain}
\newtheorem{observation}[theorem]{Observation}
\begin{document}

\maketitle

\begin{abstract}
We study the $k$-center problem in a kinetic setting: given a set of continuously moving points $P$ in the plane, determine a set of $k$ (moving) disks that cover $P$ at every time step, such that the disks are as small as possible at any point in time. Whereas the optimal solution over time may exhibit discontinuous changes, many practical applications require the solution to be \emph{stable}: the disks must move smoothly over time. Existing results on this problem require the disks to move with a bounded speed, but this model allows positive results only for $k<3$. Hence, the results are limited and offer little theoretical insight. Instead, we study the \emph{topological stability} of $k$-centers. Topological stability was recently introduced and simply requires the solution to change continuously, but may do so arbitrarily fast. We prove upper and lower bounds on the ratio between the radii of an optimal but unstable solution and the radii of a topologically stable solution---the topological stability ratio---considering various metrics and various optimization criteria. For $k = 2$ we provide tight bounds, and for small $k > 2$ we can obtain nontrivial lower and upper bounds. Finally, we provide an algorithm to compute the topological stability ratio in polynomial time for constant $k$.

%Many geometric structures are defined for static inputs. When the input changes over time, the structure should be updated accordingly.
%Typically, we want to ensure that we obtain a smoothly changing structure for a smoothly changing input.
%In this paper, we concern ourselves with points moving in the 2-dimensional plane and study the $k$-center problem: covering all points with $k$ disks.
%The optimal solution over time for this problem may exhibit discontinuous changes; we aim to overcome these issues by studying the solutions under the requirement of continuous changes (stability).
%
%We focus on so-called topological stability, in which the $k$-center solution (i.e. the disk centers and radii) must change continuously but may do so arbitrarily fast.
%We investigate upper and lower bounds on the ratio between the radii of an optimal but unstable solution and the radii of such a continuously changing solution.
%For the case that $k = 2$, we provide tight bounds for the problem where we want to minimize the sum or maximum of the radii of the covering disks.
%Moreover, for $k > 2$, we provide a generic lower bound and tools for investigating the structure of situations where the optimal solution is not unique; these tools allow us to provide a nontrivial upper bound for small $k$.
%Finally, we provide an algorithm to compute this stability ratio in polynomial time for constant $k$.
 \end{abstract}

\section{Introduction}
The \emph{$k$-center problem} or \emph{facility location problem} asks for a set of $k$ disks that cover a given set of $n$ points, such that the radii of the disks are as small as possible by some measure. The problem can be interpreted as placing a set of $k$ facilities (e.g. stores) such that the distance from every point (e.g. client) to the closest facility is minimized. Since the introduction of the $k$-center problem by Sylvester~\cite{sylvester1857question} in 1857, the problem has been widely studied and has found many applications in practice. Although the $k$-center problem is NP-hard if $k$ is part of the input~\cite{meggido1984complexity}, efficient algorithms have been developed for small $k$. Using rectilinear distance, the problem can be solved in $O(n)$ time~\cite{drezner1987rectangular,hoffmann1999simple,sharir1996rectilinear} for $k = 2,3$ and in $O(n \log n)$ time~\cite{nussbaum1997rectilinear,segal1997piercing} for $k = 4,5$. The problem becomes harder for Euclidean distance, and the currently best known algorithm for Euclidean $2$-centers runs in $O(n \log^2 n (\log \log n)^2)$ time~\cite{chan1999planar}.

In recent decades there has been an increased interest, especially in the computational geometry community, to study problems for which the input points are moving, including the $k$-center problem. These problems are typically studied in the framework of \emph{kinetic data structures}~\cite{basch1999data}, where the goal is to efficiently maintain the (optimal) solution to the problem as the points are moving.
The kinetic version of the $k$-center problem also finds a lot of practical applications in, for example, mobile networks and robotics.
A number of kinetic data structures have been developed for maintaining (approximate) $k$-centers~\cite{degener2010kinetic,friedler2010approximation,gao2003discrete,gao2006deformable}, but in a kinetic setting another important aspect starts playing a role: \emph{stability}.
In many practical applications, e.g., if the disks are represented physically, or if the disks are used for visualization, the disks should move smoothly as the points are moving smoothly. As the optimal $k$-center may exhibit discontinuous changes as points move (see Figure~\ref{fig:discontinuity}), we need to resort to approximations to guarantee stability.

\begin{figure}
\begin{minipage}{0.65\linewidth}
\includegraphics{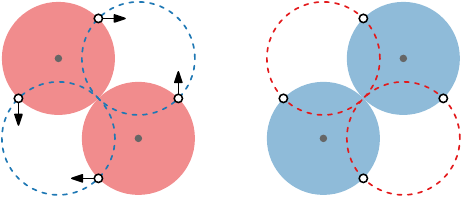}
\end{minipage}	
\begin{minipage}{0.35\linewidth}
\caption{A small change in point positions causes a large change in the smallest 2 covering disks: moving the points along the arrows changes the optimal solution from the red to the blue solution. The optimal centers are marked in grey, while sub-optimal solutions are shown as dashed circles.}
\label{fig:discontinuity}
\end{minipage}
\end{figure}

The natural and most intuitive way to enforce stability is as follows. We assume that each point is moving with at most unit speed, and then we bound the speed of the disks. Durocher and Kirkpatrick~\cite{durocher2008bounded} consider this type of stability for Euclidean $2$-centers. They show that an approximation ratio on the maximum disk radius of $8/\pi \approx 2.55$ can be maintained when the disks can move with speed (at most) $8/\pi + 1 \approx 3.55$.
Similarly, in the black-box KDS model, where the complete input is not known beforehand, de Berg \etal~\cite{deberg2013kinetic} show an approximation ratio of $2.29$ for Euclidean $2$-centers with maximum speed $4\sqrt{2}$.
Furthermore, a similar approach was already used by Bespamyatnikh \etal~\cite{DBLP:conf/dialm/BespamyatnikhBKS00} to find approximations for $1$-centers on moving points. 
%Similar results (with approximation ratio $2.29$ and maximum speed $4\sqrt{2}$ for Euclidean $2$-centers) have been shown in the black-box KDS model by de Berg \etal~\cite{deberg2013kinetic}.

However, this natural approach to stability is typically hard to work with and difficult to analyze. This is caused by the fact that several different aspects are influencing the optimality of solutions that move with bounded speed:
\begin{enumerate}
  \item How is the quality of the solution influenced by enforcing continuous motion?
  \item How ``far'' apart are combinatorially different optimal (or approximate) solutions, that is, how long does it take to change from one solution to the other?
  \item How often can optimal (or approximate) solutions change their combinatorial structure?
\end{enumerate}
Ideally we would use a direct approach and design an algorithm that (roughly) keeps track of the optimal solution and tries to stay as close as possible while adhering to the speed constraints. However, especially the latter two aspects make this direct approach hard to analyze. It is therefore no surprise that most (if not all) approaches to stable solutions are indirect: defining a different structure that is stable in nature and that provides an approximation to what we really want to compute. Although interesting in their own right, such indirect approaches have several drawbacks: (1) techniques do not easily extend to other problems, (2) it is hard to perform better (or near-optimal) for instances where the optimal solution is already fairly stable, and (3) these approaches do not offer much theoretical insight into how optimal solutions (or, by extension, approximate solutions) behave as the points are moving. To gain a better theoretical insight into the concept of stability, we need to look at the aspects listed above, ideally in isolation.

\begin{figure}[t]
\centering
\includegraphics{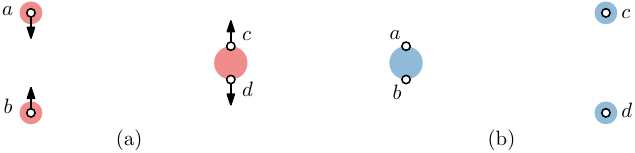}
\caption{An instance for $k=3$ with unbounded Lipschitz stability.}
\label{fig:lipschitz}
\end{figure}

Recently, Meulemans \etal~\cite{meulemans2017framework} introduced a new framework for algorithm stability. This framework includes the natural approach to stability described above (called \emph{Lipschitz stability} in~\cite{meulemans2017framework}), but it also includes the definition of \emph{topological stability}.
An algorithm is topologically stable if its output behaves continuously as the input is changing. The topological stability ratio of a problem is then defined as the optimal approximation ratio of any algorithm that is topologically stable. The analysis of upper and lower bound on this ratio, as in~\cite{meulemans2017framework}, is independent of the problem at hand, and can therefore be applied to not only the kinetic $k$-center problem, but also other variants such as kinetic robust $k$-center, $k$-means, and $k$-median, but also many other unrelated problems. A more formal definition of this type of stability and of continuity are given in the paragraph explaining topological stability.

Due to the fact that it allows arbitrary speed, topological stability is mostly interesting from a theoretical point of view: it provides insight into the interplay between problem instances, solutions, and the optimization function; an insight that is invaluable for the development of stable algorithms. 
Nonetheless, topological stability still has practical uses: an example of a very fast and stable change in visualization can be found when opening a minimized application in most operating systems. The transition starts with the application having a very small size, even as small as a point. The application quickly grows to its intended size in a very smooth and fluid way, which helps the user grasp what is happening.

Next to the results for Lipschitz stability on the 2-center problem mentioned above, Durocher and Kirkpatrick also showed that for $k$-centers with $k > 2$, no approximation factor can be guaranteed with disks of any bounded speed~\cite{durocher2006geometric}.
Figure~\ref{fig:lipschitz} shows an instance that requires infinite speed of the disk centers for $k = 3$. In Figure~\ref{fig:lipschitz} (left) we see point $a$ and $b$ moving closer, and $c$ and $d$ moving apart. When the distance between $a$ and $b$ becomes less than the distance between $c$ and $d$, one of the disks covering $a$ or $b$ has to move to $c$ or $d$, to keep the radius of the disks minimal. The resulting solution can be seen in Figure~\ref{fig:lipschitz} (right). Since the distance between $a,b$ and $c,d$ can be arbitrarily large, either a disk has to move with arbitrarily high speed, or, given bounded speed, the radius of the disk has to be arbitrarily large to ensure that all points are covered at all times. This configuration can occur in any instance of the $k$-center problem for $k \geq 3$, hence the $K$-Lipschitz stability ratio of the $k$-center problem may be unbounded for any bounded speed $K$.

\subparagraph{$k$-center variants}
An instance of the $k$-center problem arises from three choices to obtain variants of the problem: the number $k$ of covering shapes, the geometry of the covering shapes and the criterion that measures solution quality.
In this paper, we consider two types of covering shapes:
(a) in the \emph{Euclidean} model, the covering shapes are disks; (b) in the \emph{rectilinear} model, the covering shapes are axis-aligned squares.
The radius of a covering shape is the distance from its center to its boundary, under $L_2$ for the Euclidean model and $L_{\infty}$ for the rectilinear model.
Furthermore, we distinguish two criteria: (a) in the \emph{minmax} model, the quality of a solution is the maximum radius of its covering shapes, the optimization criterion is to minimize this maximum radius; (b) in the \emph{minsum} model, the quality of a solution is the sum of radii of all $k$ covering shapes, the optimization criterion is to minimize this sum of radii.

The above results in four variants of the problem that can be defined for any $k \geq 2$.
We use the notation $k$-EC and $k$-RC to denote the Euclidean and rectilinear $k$-center problem, appending either -minmax or -minsum to indicate the quality criterion.

\subparagraph{Topological stability}
Let us now interpret topological stability, as proposed in \cite{meulemans2017framework}, for the $k$-centers problem.
Let $\mathcal{I}$ denote the input space of $n$ (stationary) points in $\mathbb{R}^2$ and
$\mathcal{S}^k$ the solution space of all configurations of $k$ disks or squares of varying radii.
Let $\Pi$ denote the $k$-center problem with criterion $f \colon \mathcal{I} \times \mathcal{S}^k \rightarrow \mathbb{R}$ (minmax or minsum).
We call a solution in $\mathcal{S}^k$ valid for an instance in $\mathcal{I}$ if it covers all points of the instance.
An optimal algorithm $\OPT$ maps an instance of $\mathcal{I}$ to a solution in $\mathcal{S}^k$ that is valid and minimizes $f$.

To define instances on moving points and move towards stability, we capture the continuous motion of points in a topology $\mathcal{T}_\mathcal{I}$; an instance of moving points is then a path $\pi \colon [0,1] \rightarrow \mathcal{I}$ through $\mathcal{T}_\mathcal{I}$.
Similarly, we capture the continuity of solutions in a topology $\mathcal{T}^k_\mathcal{S}$, of $k$ disks or squares with continuously moving centers and radii.
A \emph{topologically stable} algorithm $\mathcal{A}$ maps a path $\pi$ in $\mathcal{T}_\mathcal{I}$ to a path in $\mathcal{T}^k_\mathcal{S}$.\footnote{Whereas \cite{meulemans2017framework} assumes the black-box model, we allow omniscient algorithms, knowing the trajectories of the moving points beforehand. That is, the algorithm may use knowledge of future positions to improve on stability. This gives more power to stable algorithms, potentially decreasing the ratio. However, our bounds do not use this and thus are also bounds under the black-box model.}
We use $\mathcal{A}(\pi, t)$ to denote the solution in $\mathcal{S}^k$ defined by $\mathcal{A}$ for the points at time $t$.
The stability ratio of the problem $\Pi$ is now the ratio between the best stable algorithm and the optimal (possibly nonstable) solution:
\begin{equation*}
\TS(\Pi, \mathcal{T}_\mathcal{I}, \mathcal{T}^k_\mathcal{S}) = \inf_{\mathcal{A}} \sup_{\pi \in \mathcal{T}_\mathcal{I}} \sup_{t \in [0,1]} \frac{f(\pi(t), \mathcal{A}(\pi,t))}{f(\pi(t), \OPT(\pi(t)))}
\end{equation*}
where the infimum is taken over all topologically stable algorithms that give valid solutions.
%If $\Pi$ is a maximization problem, $\TS$ is at most $1$;
For a minimization problem $\TS$ is at least $1$; lower values indicate better stability.
%Naturally, if $\OPT$ is already topologically stable, then this type of analysis does not provide any insight and the ratio is simply $1$.
%However, in many cases, $\OPT$ is not topologically stable.

\begin{table}[b]
    \centering
    \caption{Overview of the bounds on the topological stability ratio for the $k$-center problem.}
    \label{tab:result-table}
\begin{tabu} to 0.8\textwidth {X[0.28,l] X[0.65,l] X[0.23] |[1pt] X[0.8,c] | X[0.9,c] | X[0.9,c]}
  \rowfont[c]{\bfseries}
  \multicolumn{3}{c|[1pt]}{\textbf{$\TS(\Pi, \mathcal{T}_\mathcal{I}, \mathcal{T}^k_\mathcal{S})$}} & $k=2$ & $k=3$ & $k > 3$ \\
  \midrule
  \multirow {4}{=}{\rotatebox[origin=c]{90}{\textbf{Euclidean}}} & \multirow{2}{=}{\textbf{minmax}} & $O$ & \multirow{2}{*}{$\sqrt{2}$} & $\big(1+\sqrt{7}\big)/2$ & 2 \\
  & & $\Omega$ & & $\sqrt{3}$ & $\sin(\frac{\pi (k-1)}{2 k})$ \\
  \cmidrule{2-6}
  & \multirow{2}{=}{\textbf{minsum}} & $O$ & \multirow{2}{*}{2} & \multirow{2}{*}{2} & \multirow{2}{*}{2} \\
  & & $\Omega$ & & & \\
  \midrule
  \multirow {4}{=}{\rotatebox[origin=c]{90}{\textbf{Rectilinear}}} & \multirow{2}{=}{\textbf{minmax}} & $O$ & \multirow{2}{*}{2} & \multirow{2}{*}{2} & \multirow{2}{*}{2} \\
  & & $\Omega$ & & & \\
  \cmidrule{2-6}
  & \multirow{2}{=}{\textbf{minsum}} & $O$ & \multirow{2}{*}{2} & \multirow{2}{*}{2} & \multirow{2}{*}{2} \\
  & & $\Omega$ & & & \\
  \midrule
\end{tabu}
\end{table}

\subparagraph{Contributions}
In this paper we study the topological stability of the $k$-center problem. Although the obtained solutions are arguably not stable, since they can move with arbitrary speed, we believe that analysis of the topological stability ratio offers deeper insights into the kinetic $k$-center problem, and by extension, the quality of truly stable $k$-centers.

In Section~\ref{sec:bounds}, we prove various bounds on the topological stability for this problem, as summarized in Table~\ref{tab:result-table}.
For $k$-EC-minmax, the ratio is $\sqrt{2}$ for $k = 2$; for arbitrary $k$, we prove an upper bound of $2$ and a lower bound that converges to $2$ as $k$ tends to infinity.
For small $k$, we show an upper bound strictly below $2$ as well.
For the other three variants, the stability ratio is exactly $2$ for any $k \geq 2$.
In Section~\ref{sec:algo}, we provide an algorithm to compute a topologically stable solution for an instance of the kinetic $k$-center problem in polynomial time for constant $k$. The approximation ratio for such a solution is generally better than the generic bounds we prove in Section~\ref{sec:bounds}.
%Due to space constraints, some proofs have been moved into the appendix.

\section{Bounds on topological stability}
\label{sec:bounds}
As illustrated above, some point sets have more than one optimal solution.
If we can transform an optimal solution into another, by growing the covering disks or squares by at most (or at least) a factor of $r$, we immediately obtain an upper bound (or respectively a lower bound) of $r$ on the topological stability.
To analyze topological stability of k-center, we therefore start with an input instance for which there is more than one optimal solution, and continuously transform one optimal solution into another. This transformation, which we also call a \emph{morph}, allows the centers to move along a continuous path, while their radii can grow and shrink. At any point during this morph, the intermediate solution should cover all points of the input. In particular, the following property holds: The solution before and after a morph each cover all the points, hence all points live in the intersection of these two solutions. The maximum approximation ratio $r$ that we need for such a morph, gives a bound on the topological stability of $k$-center. We can simply consider the input to be static during the morph, since for topological stability the solution can move arbitrarily fast.
Before analyzing topological stability, we first use the two properties described above, points living in the intersection of two optimal solutions and points being static during morphs, to model relevant problem instances and their optimal solutions in \emph{2-colored intersection graphs}. These graphs allow us to reason about morphs between two optimal solutions.
We then focus on the Euclidean minmax case.
Finally, we briefly consider the minsum and rectilinear cases.

\subparagraph{2-colored intersection graphs}
Consider a point set $P$ and two sets of $k$ convex shapes (disks, squares, ...), such that each set covers all points in $P$: we use $R$ to denote the one set (say, red) and $B$ to denote the other set (blue).
We now define the 2-colored intersection graph $G_{R,B} = (V, E)$: each vertex represents a shape ($V = R \cup B$) and is either red or blue; $E$ contains an edge for each pair of differently colored, intersecting shapes.
A 2-colored intersection graph always contains equally many red nodes as blue nodes, since the nodes of each color represent the $k$ covering shapes in an optimal solution.
Additionally, both solutions represented in a 2-colored intersection graph must cover all points, and hence there may be points only in the area of intersection between a blue and red shape. If there are points that are not covered by one of the two colors, then this color is not a valid solution to the $k$-center problem
In the remainder, we use intersection graph to refer to 2-colored intersection graphs.

\newpage
\begin{lemma}\label{lem:2colortree}
	An intersection forest has at least one node of degree at most~$1$ of each color.
\end{lemma}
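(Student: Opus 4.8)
The plan is to prove the statement by a simple degree-counting argument that exploits the bipartite structure of the intersection graph together with the equal-cardinality constraint. First I would observe that, since edges of $G_{R,B}$ connect only differently colored shapes, the graph is bipartite with the two sides being exactly the red nodes $R$ and the blue nodes $B$, and that (as noted for 2-colored intersection graphs) $|R| = |B| = k$. Because the graph is bipartite, every edge has one red and one blue endpoint, so the sum of the degrees of the red nodes equals the number of edges, and likewise the sum of the degrees of the blue nodes equals the number of edges.

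Next I would invoke the fact that the graph is a forest. A forest on $N$ vertices with $c \geq 1$ connected components has exactly $N - c$ edges. Here $N = 2k$, so the number of edges is $2k - c \leq 2k - 1 < 2k$. Combining this with the previous step, the sum of the degrees of the $k$ red nodes is $2k - c$, so the average red degree is $(2k-c)/k < 2$. Hence at least one red node has degree at most $1$. Applying the identical argument to the $k$ blue nodes shows that at least one blue node has degree at most $1$, giving a low-degree node of each color as required.

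I do not expect a genuine obstacle here; the only thing to get right is to ensure the counting uses the constraint $|R| = |B|$, which forces the total vertex count to be $2k$ and is exactly what drives the average degree strictly below $2$ on each color class separately. Without equal cardinalities the claim can fail (a single red node adjacent to many blue leaves has no low-degree red node), so the plan hinges on invoking that equal-cardinality property. A minor point to dispatch cleanly is the degenerate case of isolated nodes of degree $0$, but these trivially satisfy the degree-at-most-$1$ conclusion and need no special treatment.
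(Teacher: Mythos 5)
Your proof is correct, and it takes a genuinely different route from the paper's. The paper argues structurally: it selects a tree $T$ of the forest with at least as many red as blue nodes, assumes for contradiction that all leaves are blue, decomposes $T$ into alternating-color paths rooted at a leaf, and counts colors along these paths to contradict the choice of $T$. You instead use a global handshake-style count: since the graph is bipartite with parts $R$ and $B$, each edge contributes exactly once to the red degree sum, so that sum equals the edge count, which for a forest on $2k$ vertices with $c \geq 1$ components is $2k - c \leq 2k - 1 < 2k$; if every red node had degree at least $2$ the red degree sum would be at least $2k$, a contradiction, and symmetrically for blue. Your argument is shorter, avoids the path-decomposition machinery entirely, and treats the two colors symmetrically without having to single out a particular tree; you also correctly identify that the equal-cardinality hypothesis $|R| = |B|$ is what makes the count work and that degree-$0$ nodes need no special handling. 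The paper's argument, while longer, localizes the low-degree node to a specific tree of the forest, which is mildly more informative but is not needed for any of the later applications (Lemma~\ref{lem:treestable} only needs existence).
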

\begin{proof}
	Let $F$ be an intersection forest. We prove that $F$ has at least one red node of degree at most 1; the blue case is symmetric.
Since $F$ contains equally many blue and red nodes, there must be a tree $T$ in $F$ having at least as many red nodes as blue nodes.
To arrive at a contradiction, assume that $T$ has only blue leaves.

We decompose $T$ into paths as follows. Pick an arbitrary leaf as a root.
Partition the nodes of $T$ into paths such that each path starts at a nonroot leaf, e.g.\ by running a BFS from each such leaf simultaneously following edges towards the root or using a heavy-path decomposition.
Because $T$ is part of an intersection graph, each path alternates between red and blue nodes.
Hence, the path ending at the root, starting and ending at a blue leaf, has one more blue node than red nodes; the other paths cannot have more red nodes than blue nodes, since at least one endpoint is a leaf and thus blue.
Now, $T$ has more blue than red nodes, which contradicts that $T$ has at least as many red as blue nodes.
Thus, $T$ cannot have only blue leaves.
\end{proof}

\begin{lemma}\label{lem:treestable}
	Consider two sets $R$ and $B$ of $k$ convex translates each covering a point set $P$.
    If intersection graph $G_{R,B}$ is a forest, then $R$ can morph onto $B$ without increasing the shape size, while covering all points in $P$.
\end{lemma}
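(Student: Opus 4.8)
The plan is to prove the statement by induction on $k$, repeatedly peeling off one red shape, translating it onto a blue shape, and recursing on the remainder. The engine driving the induction is Lemma~\ref{lem:2colortree}: since $G_{R,B}$ is a forest, it contains a red node $r$ of degree at most $1$. The key geometric observation that makes a size-preserving morph possible is the following. If $C$ is the common convex shape and a point $p$ lies in two translates $C+u$ and $C+v$, then $p \in C + ((1-t)u + tv)$ for all $t \in [0,1]$, because $p-u, p-v \in C$ and $C$ is convex. Hence translating a shape \emph{linearly} from position $u$ to position $v$ never drops a point contained in both the start and the end copy, and the shape keeps its size throughout.

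I would first locate the red node $r$ of degree at most $1$ and choose a blue target $b$: if $r$ has degree $1$, let $b$ be its unique blue neighbour; if $r$ has degree $0$, let $b$ be an arbitrary blue shape. The morph step then translates $r$ linearly onto $b$ while leaving all other shapes fixed. To see that coverage of $P$ is maintained during this step, note that it suffices to track points of $P$ covered by $r$ and by no other red shape. In the degree-$0$ case there are none: any $p \in r \cap P$ would lie in some blue shape (as $B$ covers $P$), forcing an edge at $r$, so $r \cap P = \emptyset$ and $R \setminus \{r\}$ already covers $P$. In the degree-$1$ case, every $p \in r \cap P$ lies in $b$: it is covered by some blue shape, which must then intersect $r$ and hence equals the unique neighbour $b$. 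Thus $p$ lies in both the start copy and the target copy of the moving shape, so by the convexity observation it stays covered throughout the translation.

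After the step, $r$ coincides with $b$; I keep it fixed there for the remainder of the morph, where it covers exactly $b \cap P$. I then set $P' = P \setminus b$ and recurse on $R' = R \setminus \{r\}$ and $B' = B \setminus \{b\}$. Both sets cover $P'$: a point $p \in P'$ is not in $b$, so its covering blue shape lies in $B'$, and since $p$ is covered by red but not by $r$ (either $r$ covers nothing, or $r \cap P \subseteq b$ while $p \notin b$), it is covered by $R'$. Moreover $G_{R',B'}$ is an induced subgraph of a forest, hence again a forest, so the inductive hypothesis applies and yields a size-preserving morph of $R'$ onto $B'$ covering $P'$ at all times. Concatenating this with the single translation step, and observing that the fixed copy at $b$ always supplies $b \cap P$ while the recursion supplies $P \setminus b$, gives a continuous, size-preserving morph of $R$ onto $B$ that covers $P$ throughout. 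The base case $k=0$ is immediate.

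The step I expect to be the main obstacle is verifying that coverage holds at \emph{every} intermediate time, not merely at the endpoints; this is exactly where convexity of the shapes is essential, and it is the reason the claim would fail for non-convex translates. A secondary point requiring care is the bookkeeping that guarantees both $R'$ and $B'$ still cover the residual point set $P'$, which relies precisely on the degree bound from Lemma~\ref{lem:2colortree} limiting which blue shapes $r$ can interact with.
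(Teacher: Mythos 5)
Your proof is correct and follows essentially the same route as the paper's: induction on $k$, peeling off a red node of degree at most $1$ via Lemma~\ref{lem:2colortree}, translating it linearly onto its blue neighbour (or an arbitrary blue shape in the degree-$0$ case), and recursing on $R'$, $B'$. You merely spell out in more detail the convexity argument for intermediate coverage and the bookkeeping that $R'$ and $B'$ still cover the residual points, which the paper leaves implicit.
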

\begin{proof}
    We prove this lemma by induction on $k$.
    For the base case, $k = 0$, the intersection graph is empty and thus we can trivially morph all red shapes onto the blue shapes.

    For $k > 0$, we reason as follows.
    Since $G_{R,B}$ is a forest, Lemma~\ref{lem:2colortree} tells us that there is a red node $r$ with degree at most 1.
    If $r$ has degree 1, then its one neighbor $b$ must be a blue node; if $r$ has degree 0, then we pick any blue node $b$.
    We morph $r$ onto $b$ by linearly moving the center of $r$ to the center of $b$.
    Since $r$ and $b$ are convex translates, this covers their intersection at all times.
    Now, the new position of the red shape covers any point originally covered by $r$ or $b$.
    Consider $R' = R\setminus\{r\}$ and $B' = B\setminus\{b\}$.
    These sets have size $k-1$ and define an intersection forest $G_{R',B'}$ with $k-1$ shapes.
    The induction hypothesis readily tells us that there is a morph from $R'$ into $B'$ without increasing their size.
    The morph of $r$ onto $b$, followed by the morph of the smaller instance yields us a morph from $R$ to $B$.
\end{proof}

\subparagraph{Euclidean minmax case}
We are now ready to analyze the Euclidean minmax case.
Without loss of generality, we assume here that the disks all have the same radius. If not all disks have the same radius, we can simply grow all disks to have the same radius as the largest disk, since this does not change the optimality nor the validity of a solution.
We first need a few results on (static) intersection graphs, to argue later about topological stability.

\begin{lemma}\label{lem:4cycleUB}
	Let $R$ and $B$ to be optimal solutions to a point set $P$ for $k$-EC-minmax.
	Assume the intersection graph $G_{R,B}$ has a $4$-cycle with a red degree-$2$ vertex.
	To transform $R$ in such a way that $G_{R,B}$ misses one edge of the $4$-cycle, while covering the area initially covered by both sets, it is sufficient to increase the disk radius of a red disk by a factor $\sqrt{2}$.
\end{lemma}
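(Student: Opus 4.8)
We have two optimal solutions $R$ and $B$ to a $k$-EC-minmax instance, so all disks have the same radius, say $\rho$. The intersection graph $G_{R,B}$ contains a 4-cycle. A 4-cycle in a 2-colored graph alternates colors, so it consists of two red disks and two blue disks, say $r_1, r_2$ (red) and $b_1, b_2$ (blue), with edges $r_1 b_1$, $b_1 r_2$, $r_2 b_2$, $b_2 r_1$. One of the red vertices, say $r_1$, has degree exactly 2 (so its only neighbors in the whole graph are $b_1$ and $b_2$). The goal is to move $r_1$ (growing it by at most a factor $\sqrt 2$) so that it loses one of its two cycle-edges — i.e., it stops intersecting one of $b_1, b_2$ — while the red set still covers everything red and blue originally covered together.

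Let me think about the geometry. The 4-cycle means $r_1$ intersects both $b_1$ and $b_2$.

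**Geometric plan.** Since $r_1$ has degree 2, it touches only $b_1$ and $b_2$. The region $r_1$ must keep covering (within the union of $R$ and $B$) consists of $r_1$'s own interior plus whatever $r_1$ uniquely contributes to the intersection pattern; crucially, because $r_1$ only meets $b_1$ and $b_2$, any point that must stay covered by the red set and currently lies in $r_1$ either lies in $b_1 \cup b_2$ (hence is covered by blue and will be handled by $b_1, b_2$) or lies in $r_1 \setminus (b_1 \cup b_2)$. The plan is to show that the latter region — the part of $r_1$ not shared with either blue disk — fits inside a single disk of radius $\sqrt 2 \,\rho$ whose center can be reached from $r_1$'s center by a path, and that this enlarged disk can be positioned so it no longer intersects (say) $b_2$. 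I would first set up coordinates with $r_1$ centered at the origin, $b_1$ and $b_2$ each at distance at most $2\rho$ (since they intersect $r_1$). The two chords where $b_1$ and $b_2$ cut the boundary circle of $r_1$ delimit the "free" cap(s) of $r_1$.

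**The key step.** The heart of the argument is a worst-case bound: over all placements of $b_1, b_2$ consistent with both touching $r_1$, the part of $r_1$ lying outside $b_1 \cup b_2$ that we must keep covered is contained in a disk of radius $\sqrt 2\,\rho$. The factor $\sqrt 2$ strongly suggests the extremal configuration is when $b_1$ and $b_2$ meet $r_1$ so that the uncovered cap of $r_1$ is a quarter-disk (a $90^\circ$ sector plus its cap), whose circumscribing disk has radius exactly $\sqrt 2$ times the chord half-length — and here that works out to $\sqrt2\,\rho$. So I would (i) parametrize the uncovered portion of $r_1$ by the angular positions of the two cuts, (ii) argue by an extremal/monotonicity argument that the smallest enclosing disk of this uncovered region is maximized when the two cuts are orthogonal, and (iii) check that at this extremum the enclosing radius is $\sqrt2\,\rho$ and that growing $r_1$ to that radius and re-centering lets it pull away from one blue disk while still covering the region it is responsible for. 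Throughout the motion the disk stays a valid cover by a continuity/containment argument; since topological stability permits arbitrary speed, I only need the existence of such an intermediate enlarged disk, not a bounded-speed path.

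**Main obstacle.** The delicate part is step (ii): verifying that $\sqrt 2$ is genuinely the worst case over \emph{all} legal placements of the two neighboring blue disks, rather than just the one symmetric configuration. The region $r_1 \setminus (b_1 \cup b_2)$ can be a nonconvex lune-like shape depending on how deeply $b_1$ and $b_2$ penetrate and how they overlap each other, and its smallest enclosing disk is not an elementary closed form. I would control this by reducing to the extreme case where each blue disk just barely intersects $r_1$ (penetration as shallow as possible, making the uncovered region as large as possible) and where the two cuts are as far apart as the cycle structure forces them to be; a careful case analysis on whether $b_1$ and $b_2$ themselves overlap may be needed. I expect the clean $\sqrt2$ to emerge precisely because two orthogonal boundary chords of a unit disk leave a region whose circumradius is $\sqrt2$, and the remaining work is to rule out any configuration that could force a larger enclosing disk.
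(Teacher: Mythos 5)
Your approach targets the wrong region and misreads what must stay covered during the transformation. Since $r_1$ has degree $2$, every input point inside $r_1$ already lies in $r_1\cap(b_1\cup b_2)$, and the set that must remain covered is the area covered by \emph{both} colour classes; so the region $r_1\setminus(b_1\cup b_2)$ that your key step tries to enclose in a radius-$\sqrt2$ disk contains nothing that needs covering. Conversely, the points you dismiss as ``covered by blue and handled by $b_1,b_2$'' are exactly the ones at risk: during the morph only the moving (red) disks provide coverage --- $B$ is the destination configuration, not a concurrent cover --- so repositioning $r_1$ away from $b_2$ abandons any input point in $r_1\cap b_2$ that no other red disk contains. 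Finally, detaching $r_1$ from one blue disk does not remove the real obstruction of the $4$-cycle, namely that the \emph{other} red disk $r_2$ is pinned by having to keep covering both of its lenses $r_2\cap b_1$ and $r_2\cap b_2$.

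The paper's proof goes in the opposite direction: the degree-$2$ red disk \emph{grows} so as to take over one of $r_2$'s responsibilities, after which $r_2$ can be treated as a degree-$1$ vertex and the cycle is effectively broken. Concretely, label the four intersection lenses of the cycle $a,b,c,d$ in cyclic order and pick extreme points $p_a,p_b,p_c,p_d$ in them. Each pair of adjacent lenses lies inside a common unit-radius disk, so the four ``side'' distances are at most $2$; a triangle-inequality argument on the quadrilateral shows that at least one diagonal pair, $(p_a,p_c)$ or $(p_b,p_d)$, is at distance at most $2\sqrt2$; and three points with pairwise distances at most $2$, $2$ and $2\sqrt2$ fit in a disk of radius $\sqrt2$. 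Growing the degree-$2$ red disk to such a disk covering three consecutive lenses is what yields the factor $\sqrt2$. If you want to salvage your plan, the extremal analysis must be aimed at the union of three consecutive lens regions of the cycle, not at the blue-free part of $r_1$.
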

\begin{proof}
To morph from $R$ to $B$, a red disk $r_1$ has to grow to cover the intersection of an adjacent blue disk $b$ with the other (red) neighbor $r_2$ of $b$. Once $r_1$ has grown to overlap the intersection between a blue disk and $r_2$, $r_2$ no longer has to cover this intersection and can be treated as a degree-1 vertex in $G_{R,B}$. A cycle in the intersection graph corresponds to an alternating sequence of red and blue disks, where each consecutive pair has at least one point in its intersection. Applying the above modification will turn two consecutive degree-2 nodes into degree-1 nodes, removing the cycle.
	
	As we have a $4$-cycle of intersections, $a,b,c,d$, we either have to cover both $a$ and $c$ while covering $d$ or $b$, or we have to cover $b$ and $d$ while covering either $a$ or $c$. Let $p_a\in a$ and $p_c\in c$ be the pair of points whose distance is the longest of any pair from $a$ and $c$, and similarly $p_b\in b$ and $p_d\in d$ for $b$ and $d$.
We claim that distance $(p_a,p_c)$ or $(p_b,p_d)$ is shorter than $2\sqrt{2}$.
Assume that distance $(p_a,p_c) > 2 \sqrt{2}$, otherwise we are done.
Since the disks have radius $1$, distances $(p_a,p_b),(p_b,p_c),(p_c,p_d),(p_d,p_a)$ are at most $2$. Our assumption on $(p_a,p_c)$ now implies that the distance from either $p_b$ or $p_d$ to the middle of the line between $p_a$ and $p_c$ is shorter than $\sqrt{2}$. By the triangle inequality, distance $(p_b,p_d)$ is now shorter than $2\sqrt{2}$.
	
	Assume w.l.o.g. that $(p_a,p_c)$ is shorter than $2\sqrt{2}$ and that $p_a$ and $p_b$ are covered by a red disk with only two overlaps. Combining this with the fact that $(p_a,p_b),(p_b,p_c)$ are at most $2$, we can conclude that triple $(p_a,p_b,p_c)$ can be covered by growing the red disk with only two overlaps to radius $\sqrt{2}$.
\end{proof}

\begin{lemma}\label{lem:2colorcycle}
Let $R$ and $B$ be optimal solutions to a point set $P$ for $k$-EC-minmax.
Assume the intersection graph $G_{R,B}$ has only degree-2 vertices.
To transform the disks of $R$ onto $B$, while covering the area initially covered by both sets, it is sufficient to increase the disk radius by a factor $\left(1 + \sqrt{1 + 8 \cos^2(\frac{\pi}{2k})}\right)/2$.
\end{lemma}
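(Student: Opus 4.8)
The plan is to exhibit, for any such $R$ and $B$, an explicit continuous morph whose largest disk never exceeds the stated factor, and to reduce the whole analysis to a single symmetric worst case. First I would settle the combinatorial structure. Since $G_{R,B}$ is bipartite with all vertices of degree exactly~$2$, it is a disjoint union of even (alternating red--blue) cycles. I would argue that the claimed factor is monotonically increasing in the cycle length (it tends to $2$ as the length grows), so the longest cycle---of length at most $2k$---is the bottleneck, and it suffices to bound the cost of morphing a single alternating cycle $r_0, b_0, r_1, b_1, \dots, r_{k-1}, b_{k-1}$. Each red disk $r_i$ is then moved, with linearly interpolated center, onto its successor blue disk $b_i$; at the end all red disks coincide with blue disks and hence cover $P$, so the only thing to control is coverage during the motion.

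Second I would identify the binding constraint. Along the cycle, the lens $b_i \cap r_{i+1}$ is covered at the start by $r_{i+1}$ and at the end by $r_i$ (which has arrived at $b_i$); it is thus handed off from the departing disk $r_{i+1}$ to the arriving disk $r_i$. The hardest point of $P$ to keep covered is the outer tip of this lens, and by the reflection symmetry of the two interpolating motions the critical instant is the midpoint $t=\tfrac12$, where the tip is equidistant from the two relevant red centers. I would then pass to the extremal instance: place the $2k$ disk centers regularly on a circle, which I claim maximizes the required radius among all admissible configurations with this cycle structure; the one remaining degree of freedom is how much adjacent unit disks overlap, i.e.\ the half-width $h=\sqrt{1-s^2/4}$ of a lens, with $s\le 2$ the common center distance.

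Third comes the computation. With centers spaced by angle $\pi/k=2\alpha$ where $\alpha=\pi/(2k)$, the midpoint center of each moving disk lies at radius $R_C\cos\alpha$, and the critical tip lies at radius $R_C\cos\alpha+h$, an angle $2\alpha$ away from it. The law of cosines gives the required radius squared as $F(h)=4\cos^2\alpha + h^2(1-4\cos^2\alpha) + 2\sin 2\alpha\, h\sqrt{1-h^2}$. Substituting $h=\sin\psi$ turns the $h$-dependent part into $C_1\cos 2\psi + C_2\sin 2\psi$ with $C_1^2+C_2^2=2\cos^2\alpha+\tfrac14$, so its maximum over the overlap is $\tfrac12\sqrt{1+8\cos^2\alpha}$ and $\max_h F=\tfrac12\bigl(1+4\cos^2\alpha+\sqrt{1+8\cos^2\alpha}\bigr)$. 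A direct check shows this equals $x^2$ for $x=\bigl(1+\sqrt{1+8\cos^2(\pi/(2k))}\bigr)/2$, the claimed factor; in particular $x\to 2$ as $k\to\infty$ and $x=\varphi$ (the golden ratio) for $k=2$, which is weaker than optimal, consistent with $k=2$ being handled more tightly by Lemma~\ref{lem:4cycleUB}.

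The main obstacle is the extremality claim. Everything downstream is a routine optimization, but justifying that the rotationally symmetric configuration (and the outer lens tip at the handoff midpoint) is the worst case over all admissible $R,B$ requires a genuine argument---either a symmetrization/averaging step showing that perturbing away from the regular arrangement can only decrease the maximal required radius, or a direct monotonicity analysis of $F$ as the four centers surrounding a lens vary under the unit-radius and adjacency constraints. A secondary technical point is upgrading coverage of the extreme tips to full two-dimensional coverage of every lens throughout the motion, and verifying that no other point of $P$ along the cycle is more demanding than the handoff tip.
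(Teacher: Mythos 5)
Your computation in the regular case is correct and does reproduce the stated factor, but the reduction to that case is not merely ``the main obstacle'' you flag --- it is where the argument breaks, and for a structural reason. Your morph rotates all red disks around the cycle simultaneously, so its cost is the maximum over all $k$ handoff lenses; every handoff must be cheap. The paper's proof takes a different route precisely to avoid this: it grows a \emph{single} red disk $r_1$ (the one that needs to grow least) until it swallows the lens $b\cap r_2$, which deletes one edge of the cycle; the intersection graph is then a forest and Lemma~\ref{lem:treestable} finishes the morph with no further growth. Hence the paper only has to bound the \emph{cheapest} vertex, which it gets by pigeonhole: the interior angles of the simple $2k$-gon of centers sum to $(2k-2)\pi$, so some angle is at most $\pi(k-1)/k$, and combined with the diametrical-pair constraint $\alpha+2\beta\ge\pi$ this yields $\beta\ge\pi/(2k)$ and the claimed radius at that one vertex.

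Your extremality claim --- that the regular configuration is the worst case for simultaneous rotation --- is therefore not just unproven; there is good reason to believe it is false. A simple $2k$-gon can have one very small interior angle and several consecutive angles well above $\pi(k-1)/k$ (for $k=3$, two adjacent angles of $5\pi/6$, say). At a wide vertex the distance from the arriving red center to the far lens tip approaches $3$ (growth factor approaching $2$), and if the neighbouring vertex is also wide the departing disk cannot keep that tip covered long enough to compensate, so the crossing value $\max_t\min\bigl(d_0(t),d_1(t)\bigr)$ at that lens can exceed $\bigl(1+\sqrt 7\bigr)/2$. The regular polygon maximizes the \emph{minimum} angle, which is the right extremal object for the paper's ``fix the cheapest vertex'' strategy, but not for the \emph{maximum} handoff cost that your strategy must pay. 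To rescue your approach you would need to break the cycle at the cheapest vertex first --- which is exactly the paper's proof --- or supply a genuinely global argument for uniform rotation, which no local angle bound provides. By contrast, the secondary issue you raise (tips versus whole lenses) is benign: the lenses $r_i\cap b_i$ are covered throughout by convexity of translates, and only the handoff lenses need the tip analysis.
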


% \begin{wrapfigure}[10]{r}{0pt}
% 	\centering
% 	\raisebox{-10pt}[\dimexpr\height-1\baselineskip\relax]{\includegraphics[page=1]{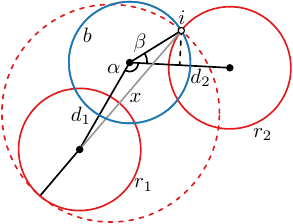}}
% \end{wrapfigure}
% \noindent{\color{darkgray}\sffamily\bfseries Proof.}
\begin{proof}
%Consider an instance of the $k$-center problem that has two optimal solutions $R$ and $B$ with an intersection graph $G_{R,B}$ that only has degree-2 vertices.
As the problem is invariant under scaling, we assume w.l.o.g that the maximum radii of the disks is $1$. To arrive are the claimed upper bound, we propose a morph from $R$ to $B$ in which a red disk $r_1$ grows to cover the intersection of an adjacent blue disk $b$ with the other (red) neighbor $r_2$ of $b$.
%Instead of growing $r_1$ to tightly cover the two intersections $b$ has with its neighbors,
Specifically, we grow $r_1$ to fully cover its initial disk and the intersection between $b$ and $r_2$ (see the dashed red disk in Figure~\ref{fig:UB-cycle}(a)).
As a result, we now have to consider only $r_1,b,r_2$ without concerning ourselves with the other neighbor of $r_1$ or $r_2$.

\begin{figure}[b]
% \begin{minipage}{0.55\linewidth}
% \includegraphics[page=1]{UB-cycle}
% \end{minipage}
% \begin{minipage}{0.45\linewidth}
% 	\caption{Configuration where the smallest angle $\alpha$ occurs in a cycle of overlapping disks.  The smallest angle occurs at a blue disk $b$. Red disk $r_1$ can grow to radius $x+1$ to also cover the overlap between $b$ and $r_2$.}
% %(b) The smallest angle occurs at a red disk. In this case it is sufficient for the red disk to grow to the same radius as when the smallest angle occurs at a blue disk.}
% 	\label{fig:UB-cycle}
% \end{minipage}
\centering
\includegraphics[page=1]{UB-cycle}
\caption{Configuration where the smallest angle $\alpha$ occurs in a cycle of overlapping disks. The diameter of red disk $r_1$ after covering its initial area and point $i$ is $1+x$.}
\label{fig:UB-cycle}
\end{figure}

Let $r_1$ be the red disk that has to grow the least, of all red disks in our instance.
Let $0\leq d_1,d_2\leq 2$ be the distance between the centers of $r_1$ and $b$ and between $r_2$ and $b$ respectively.
We know that $d_1 \leq d_2$, as otherwise $r_2$ has to grow less than $r_1$ to cover the other intersection of $b$. However, if $d_2$ is smaller, the intersection between $b$ and $r_2$ is larger so $r_1$ has to grow more to cover the intersection. We can therefore conclude that in the configuration that maximizes the factor by which $r_1$ has to grow, it holds that $d_1 = d_2 = d$.

We use $\alpha$ to denote the angle at the center point of $b$ (see figure).
Disk $r_1$ can be the least-growing disk, only if $\alpha$ is as small as possible.
Larger values of $\alpha$ readily lead to a higher maximum radius for stretching $r_1$.
Since $G_{R,B}$ is a cycle, the $2k$ disk centers thus form the vertices of a simple polygon and we find that the smallest angle in such a polygon is $\alpha \leq \frac{\pi(k-1)}{k}$: the sum of interior angles of a $2k$-gon is $\pi(2k-2)$, hence the smallest angle is never larger than $1/2k$-th of the total angle.
The boundaries of $b$ and $r_2$ intersect in at least one point; we are interested in the point $i$ that is the furthest away from the center point of $r_1$.
Let $\beta$ denote the angle at the center of $b$ between rays towards $i$ and the center of $r_2$.
We know that $\cos(\beta)=d/2$.
The distance $x$ between $i$ and the center of $r_1$ can be found using the Law of Cosines: $x^2=d^2+1^2-2d\cos(\alpha+\beta)$. The diameter of $r_1$ when overlapping its initial area and the intersection between $b$ and $r_2$ is $1+x$.
If the described configuration occurs only with the smallest angle $\alpha$ at a red disk (instead of at a blue disk as shown here), the red disk can grow to overlap both its intersections and fully cover one of the blue disks adjacent to it. This results in a disk with diameter $1+x$ that is sufficient to break the cycle.

%We now create an upper bound on $1+x$ as follows.
Since $d_1 = d_2 = d$, we can symmetrically repeat the above construction to find a point $j$ at the intersection of $r_1$ and $b$, an equivalent of point $i$, under the same angle $\beta$ at the center of $b$. The area enclosed by a line through $i$ and $j$ and disk $b$, containing both intersections with $r_1$ and $r_2$, must enclose a diametrical pair of $b$, otherwise $b$ is not an optimal disk. We therefore find that $\alpha + 2\beta \geq \pi$.
Given our assumption on the radii of the disks and the bound on $\alpha$ we can only ensure that $\alpha + 2\beta > \pi$ by forcing $d$ to become smaller. Consider the triangle with sides $x, y, z$ as in Figure~\ref{fig:UB-cycle}(b). We can show that the length of $x$ decreases as $d$ decreases: looking at the derivatives of $y = d + \cos(\pi - \alpha - \beta)$ and $z = \sin(\pi - \alpha - \beta)$, we see that they are both positive functions for $\alpha \geq \pi/2$, $0 \leq d \leq 2$, and $\beta = \arccos{\frac{d}{2}}$, namely $\frac{\delta y}{\delta d} = 1 - \sin(\pi - \alpha - \arccos{\frac{d}{2}})/\sqrt{1 + (\frac{d}{2})^2}$ and $\frac{\delta z}{\delta d} = \cos(\pi - \alpha - \arccos{\frac{d}{2}})/\sqrt{1 + (\frac{d}{2})^2}$. This means that decreasing $d$ results in a smaller diameter $1+x$ for $r_1$, after growing to cover $i$. To find a worst case ratio between the radii of $r_1$ before and after growing, we therefore get $\alpha + 2\beta = \pi$, resulting in $\beta = \pi\frac{1}{2k}$.
Hence, $\alpha + \beta = \pi - \beta$ and we can derive that $\cos(\alpha + \beta) = \cos(\pi - \beta) = -\cos(\beta)$.
Since $d = 2 \cos(\beta)$ we find that $1 + x \leq 1 + \sqrt{1 + d^2 - 2d \cos(\alpha + \beta)} \leq 1 + \sqrt{1 + 8 \cos^2(\frac{\pi}{2k})}$.
Since $1+x$ is the diameter of $r_1$ after growing, its radius is exactly half this expression.%~\qed
%As the cosine is at most $1$, we look for the value of $d$ that maximizes $d^2+1-2d$; this occurs at $d = 2$, because $d^2 < 2d$ for $0 < d < 2$. For $d = 2$, $\beta = \arccos(d/2) = 0$. We find that $x$ is upper bounded by $\sqrt{5 - 4 \cos(\frac{\pi (k-1)}{k})}$. Therefore, the radius of $r_1$ after growing is at most $\left(1+\sqrt{5 - 4 \cos(\frac{\pi (k-1)}{k})}\right)/2$.
\end{proof}

\begin{lemma}\label{lem:2colorcycleLB}
	Let $R$ and $B$ be optimal solutions to a point set $P$ for $k$-EC-minmax.
	Assume the intersection graph $G_{R,B}$ has only degree-2 vertices.
	To transform the disks of $R$ onto $B$, while covering the area initially covered by both sets, it may be necessary to increase the disk radius by a factor $2 \sin(\frac{\pi (k-1)}{2 k})$.
\end{lemma}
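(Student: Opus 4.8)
The plan is to exhibit one fully symmetric instance on which the stated growth cannot be avoided, so the claim follows as a worst case. I first observe the trigonometric identity $2\sin(\tfrac{\pi(k-1)}{2k}) = 2\cos(\tfrac{\pi}{2k})$, so (normalizing the input disks to radius $1$) it suffices to force some disk to radius $2\cos(\tfrac{\pi}{2k})$. I would place $2k$ unit disks with centers at the vertices of a regular $2k$-gon of circumradius $\cot(\tfrac{\pi}{2k})$, so that adjacent centers are at distance exactly $2\cos(\tfrac{\pi}{2k}) < 2$ (adjacent disks intersect) while centers two apart are at distance $\ge 2$ (those disks are disjoint). Coloring the disks alternately makes $R$ and $B$ two solutions whose intersection graph $G_{R,B}$ is precisely a single $2k$-cycle with all vertices of degree $2$, matching the hypothesis. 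For the point set $P$ I take the $2k$ outer intersection points (lens tips) of consecutive disk pairs; a short computation places these as a regular $2k$-gon on a concentric circle of radius $1/\sin(\tfrac{\pi}{2k})$, so that two tips $m$ apart lie at distance $\tfrac{2}{\sin(\pi/2k)}\sin(\tfrac{m\pi}{2k})$. In particular adjacent tips are at distance $2$, and tips two apart are at distance exactly $4\cos(\tfrac{\pi}{2k})$.

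Next I would verify that $R$ and $B$ are both optimal with radius $1$. Each red (resp.\ blue) disk covers exactly the two adjacent tips flanking its center, which lie at distance $2$, so radius $1$ is sufficient and tight; conversely any disk of radius $<1$ covers at most one of the tips, so $k$ disks of radius $<1$ cannot cover all $2k$ tips. Hence both colorings are optimal, and every tip lies in the region covered by \emph{both} $R$ and $B$. In $R$ the tips are partitioned into adjacent pairs one way (call this $M_R$), and in $B$ in the complementary way ($M_B$); these are the only two partitions of the tip cycle into adjacent pairs.

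The heart of the argument is a topological obstruction. Suppose, for contradiction, that some valid morph keeps every disk of radius strictly below $2\cos(\tfrac{\pi}{2k})$ at all times. Since the minimum distance between non-adjacent tips is $4\cos(\tfrac{\pi}{2k})$, such a disk can contain only adjacent tips, hence at most two tips. As all $2k$ tips must be covered by the $k$ disks at every time, each disk must in fact cover exactly one adjacent pair throughout. For a fixed disk, the set of times at which it covers a given adjacent pair is closed; these sets (over the different pairs) are pairwise disjoint, since covering two different adjacent pairs means covering at least three tips and thus two non-adjacent ones, and together they cover all of $[0,1]$. Because $[0,1]$ is connected, each disk covers one fixed adjacent pair for the entire morph. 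But at $t=0$ the pairs realize $M_R$ and at $t=1$ they realize $M_B \ne M_R$, a contradiction. Hence some disk must reach radius at least $2\cos(\tfrac{\pi}{2k})$, which proves the bound.

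I expect the main obstacle to be the geometric bookkeeping: checking that the construction yields exactly the $2k$-cycle of degree-$2$ vertices demanded by the hypothesis (in particular that non-adjacent disks are disjoint, which is where $k\ge 2$ is used), confirming that radius $1$ is optimal, and pinning down that two-apart tips realize the threshold distance $4\cos(\tfrac{\pi}{2k})$, so that the forced radius is exactly $2\cos(\tfrac{\pi}{2k})$ rather than more or less. Once the single fact ``a disk of radius below the threshold cannot contain two non-adjacent tips'' is in place, the connectedness step is routine.
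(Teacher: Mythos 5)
Your construction is the paper's own, up to a global rescaling: $2k$ points at the vertices of a regular $2k$-gon, covered by the $k$ diametral disks of adjacent pairs, with the two alternating pairings giving $R$ and $B$; the trigonometric bookkeeping (adjacent tips at distance $2$, second-order tips at $4\cos(\pi/(2k))$, and $2\sin(\pi(k-1)/(2k)) = 2\cos(\pi/(2k))$) checks out. The one genuine difference is that the paper merely asserts that some red disk must grow until it spans a second-order neighbour pair, whereas your pigeonhole-plus-connectedness argument (each disk of radius below the threshold covers exactly one adjacent pair, the covering times form disjoint closed sets partitioning $[0,1]$, and the two matchings $M_R \neq M_B$ cannot both be realized) rigorously justifies that step; this is a worthwhile strengthening of the same proof.
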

\begin{proof}
Consider a point set of $2k$ points, positioned such that they are the corners of a regular $2k$-gon with unit radius, i.e., equidistantly spread along the boundary of a unit circle.
There are exactly two optimal solutions for these points (see Fig.~\ref{fig:LB-minmax-k}).
To morph from $R$ to $B$, one of the red disks $r_1$ has to grow to cover the intersection of an adjacent blue disk $b$ with the other (red) neighbor $r_2$ of $b$ (see dashed red disk in Fig.~\ref{fig:LB-minmax-k}). Since the points are all at equal distance from each other on a unit circle, they are the vertices of a regular $2k$-gon. The diameter of the disks in our optimal solution equals the length of a side of this regular $2k$-gon. This means that a red disk has to grow such that its diameter is equal to the distance between a vertex of the $2k$-gon and a second-order neighbor. Hence, the radius of $r_1$ has to grow by a factor of $2 \sin(\frac{\pi (k-1)}{2 k})$.
%Once $r_1$ has grown to overlap the intersection between a blue disk and $r_2$, $r_2$ no longer has to cover the points in the intersection and can be treated as a degree-1 vertex in $G_{R,B}$. Since that makes the intersection graph a tree, we can morph $R$ into $B$ without further increasing the radii using Lemma~\ref{lem:treestable}.
\qedhere
	\begin{figure}
		\begin{minipage}{0.65\linewidth}
        \centering
		\includegraphics{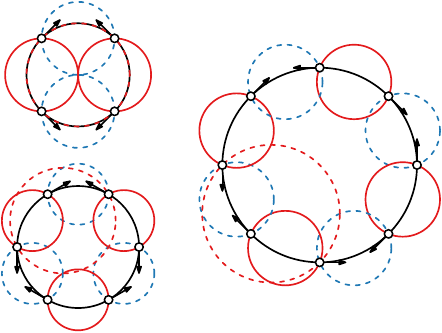}
		\end{minipage}
\begin{minipage}{0.35\linewidth}		
		\caption{Lower bound construction for the stable minmax Euclidean k-center problem, shown for $k=2,3,4$. The optimal solution changes from the solid red to the dashed blue solution. To break the cycle one of the red disks has to grow to the dashed red disk.}
		\label{fig:LB-minmax-k}
		\end{minipage}
	\end{figure}	
\end{proof}

%\subparagraph{Topological stability}
We are now ready to prove bounds on the topological stability of the minmax Euclidean case for moving points. The upcoming sequence of lemmata establishes the following theorem.

\begin{theorem}\label{thm:bounds}
%Let $\mathcal{T}^k_\mathcal{S}$ denote the topology on $\mathcal{S}$ defined by $k$ disks with continuously moving centers.
For $k$-EC-minmax, we obtain the following bounds:
\begin{itemize}\setlength{\itemsep}{-\baselineskip}
\item $\TS(2\text{-EC-minmax}, \mathcal{T}_\mathcal{I}, \mathcal{T}^2_\mathcal{S}) = \sqrt{2}$\\
\item $\sqrt{3} \leq \TS(3\text{-EC-minmax}, \mathcal{T}_\mathcal{I}, \mathcal{T}^3_\mathcal{S}) \leq \big(1+\sqrt{7}\big)/2$\\
%\item $\sqrt{2+\sqrt{2}} \leq \TS(4\text{-EC-minmax}, \mathcal{T}_\mathcal{I}, \mathcal{T}^4_\mathcal{S}) \leq \big(1+\sqrt{5+2\sqrt{2}}\big)/2$\\
\item $2 \sin(\frac{\pi (k-1)}{2 k}) \leq \TS(k\text{-EC-minmax}, \mathcal{T}_\mathcal{I}, \mathcal{T}^k_\mathcal{S}) \leq 2$ for $k > 3$.
\end{itemize}
%Then $\TS(k\text{-EC-minmax}, \mathcal{T}_\mathcal{I}, \mathcal{T}^k_\mathcal{S}) = \sqrt{2}$ for $k = 2$ and $2 \sin(\frac{\pi (k-1)}{2 k}) \leq \TS(k\text{-EC-minmax}, \mathcal{T}_\mathcal{I}, \mathcal{T}^k_\mathcal{S}) \leq 2$ for $k > 2$.\wouter{update for more exciting values of k}
\end{theorem}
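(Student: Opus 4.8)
The plan is to reduce the topological stability ratio to a purely static question: the worst-case factor by which disks must grow in order to continuously transform one optimal solution into another at a single, frozen input configuration. First I would justify this reduction. The optimal cost $f(\pi(t),\OPT(\pi(t)))$ is a continuous function of $t$, since the minmax radius needed to cover a point set depends continuously on the point positions; what can jump is only the combinatorial structure of the optimal configuration. At such a jump time $t^*$, two combinatorially distinct configurations $R$ and $B$ are both valid and optimal for $\pi(t^*)$. Because a topologically stable solution may move arbitrarily fast, we may treat the input as frozen at $\pi(t^*)$ and morph $R$ into $B$; the largest radius attained during this morph, divided by the fixed optimal radius, is exactly the ratio paid. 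Hence $\TS$ equals the supremum, over configurations admitting two optima, of the minimum factor needed to morph between them: an upper bound on every such morph yields an upper bound on $\TS$, and a single motion forcing a large morph yields a lower bound.

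For the upper bounds I would analyze the $2$-colored intersection graph $G_{R,B}$ of the two optimal solutions, assuming after scaling that all disks have radius $1$. If $G_{R,B}$ is a forest, Lemma~\ref{lem:treestable} already provides a morph with no growth at all. Otherwise $G_{R,B}$ contains a cycle, which I break by growing a single red disk to absorb the intersection of a blue neighbor $b$ with its other red neighbor $r_2$; this removes an edge, after which Lemma~\ref{lem:2colortree} supplies a leaf and Lemma~\ref{lem:treestable} completes the morph on the remaining forest. The factor needed to break a cycle is where the earlier lemmata enter, split by case. For $k=2$ the only possible cycle is a $4$-cycle, broken by Lemma~\ref{lem:4cycleUB} with factor $\sqrt2$. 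For $k=3$ the cycles are $4$-cycles or a single $6$-cycle; the $6$-cycle is the expensive case, bounded by Lemma~\ref{lem:2colorcycle} with $k=3$ by $(1+\sqrt7)/2$, while a $4$-cycle costs only $\sqrt2<(1+\sqrt7)/2$. For $k>3$ I would use a coarse but general argument: grow one red disk to the minimum disk covering both it and an adjacent blue disk, which has radius $1+d/2\le 2$ because their centers lie within distance $d\le 2$; this disk contains the entire blue neighbor and hence the contested intersection, so it breaks the cycle with factor at most $2$. In every case the grown disk shrinks back to radius $1$ as it slides onto its target while all remaining disks translate at factor $1$, so the maximum factor over the whole morph equals the cycle-breaking factor, and recursing on the smaller forest never exceeds it.

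For the lower bounds I would invoke Lemma~\ref{lem:2colorcycleLB} directly. Placing $2k$ points at the vertices of a regular $2k$-gon yields exactly two optimal $k$-center solutions whose intersection graph is a pure $2k$-cycle, and any morph between them forces some disk to span a second-order neighbor of the $2k$-gon, i.e.\ to grow by the factor $2\sin(\tfrac{\pi(k-1)}{2k})$. Realizing this construction as a continuous motion---for instance rotating the regular $2k$-gon so that the two optimal solutions are exchanged---forces every topologically stable algorithm to interpolate between them and hence to pay this factor, giving the stated lower bounds $\sqrt2$, $\sqrt3$, and $2\sin(\tfrac{\pi(k-1)}{2k})$ for $k=2$, $k=3$, and $k>3$. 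For $k=2$ this matches the upper bound, yielding the tight value $\sqrt2$.

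The main obstacle I anticipate is the general upper bound for arbitrary $k$: unlike the clean single-cycle situation of Lemma~\ref{lem:2colorcycle}, a general $G_{R,B}$ may have vertices of degree three or more and several independent cycles, so I must argue that cycles can be broken one at a time in an order keeping every disk---including previously grown ones---within the claimed factor, and that the leaf extraction of Lemmas~\ref{lem:2colortree} and~\ref{lem:treestable} always applies after each break. A secondary but genuine difficulty is making the reduction rigorous: I must ensure that the worst ratio is really attained at a single frozen two-optima configuration rather than accumulating across a sequence of structural changes, which requires that the optimal combinatorial structure changes only finitely often along any admissible motion and that the morphs at successive changes concatenate continuously.
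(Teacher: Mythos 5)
Your overall strategy---freeze the input at a time with two optimal solutions, analyze the $2$-colored intersection graph, break cycles via Lemmata~\ref{lem:4cycleUB} and~\ref{lem:2colorcycle}, finish with Lemma~\ref{lem:treestable}, and use the regular $2k$-gon of Lemma~\ref{lem:2colorcycleLB} for the lower bound---is the same as the paper's. But there are two genuine gaps. The first concerns the lower bound: exhibiting a static configuration with two optima whose morph is expensive is not enough; you must exhibit a \emph{motion} that forces every stable algorithm to actually perform that morph. Your proposed motion, rotating the regular $2k$-gon, fails to do so: under a rotation both optimal solutions rotate rigidly, so a stable algorithm can simply rotate whichever solution it currently holds and never pays any penalty. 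The paper instead moves the $2k$ points along tangents of the circle in alternating directions, so that at some time $t'$ before $t$ the pairs covered by the red disks coincide (optimal radius $0$, which forces the red solution there) and at some $t''$ after $t$ the blue pairs coincide; this pins the swap to time $t$, where the factor $2\sin(\frac{\pi(k-1)}{2k})$ is unavoidable by Lemma~\ref{lem:2colorcycleLB}. Without an argument of this kind none of your lower bounds (including $\sqrt2$ and $\sqrt3$) is established.

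The second gap is the $k=3$ upper bound. You dismiss $4$-cycles as costing only $\sqrt2$ via Lemma~\ref{lem:4cycleUB}, but that lemma requires the $4$-cycle to contain a red vertex of degree $2$; for $k=3$ both red vertices of a $4$-cycle may have degree $3$, and breaking one $4$-cycle may leave a $6$-cycle behind. These are precisely the cases on which the paper spends most of its proof, using the observation that a red and a blue disk with centers at distance $d$ satisfying $1+d/2 \le (1+\sqrt7)/2$ can be absorbed into one disk, together with a case analysis on the number of red degree-$3$ vertices. Your proposal contains none of this, so the bound $(1+\sqrt7)/2$ is not proved. Relatedly, your cycle-by-cycle scheme for the general upper bound of $2$ runs into the ordering problem you yourself flag (several cycles, previously grown disks); the paper sidesteps this entirely in Lemma~\ref{lem:trivupperbound} with a maximal matching in $G_{R,B}$, growing each matched red disk to cover its blue partner and only then relocating the unmatched ones, which bounds every radius by $2C$ without any recursion on the graph structure.
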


\newpage
\begin{lemma}\label{lem:trivupperbound}
	$\TS(k\text{-EC-minmax}, \mathcal{T}_\mathcal{I}, \mathcal{T}^k_\mathcal{S}) \leq 2$ for $k \geq 2$.
\end{lemma}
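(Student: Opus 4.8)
The plan is to reduce the general case to the forest case already handled by Lemma~\ref{lem:treestable}. By the reduction described at the start of this section it suffices to bound, for an arbitrary static instance, the factor by which disks must grow to continuously morph one optimal solution $R$ into another optimal solution $B$; as usual I assume without loss of generality that all disks of $R$ and $B$ have radius $1$. If the intersection graph $G_{R,B}$ is already a forest, then Lemma~\ref{lem:treestable} morphs $R$ onto $B$ without growing any disk at all, giving ratio $1 \le 2$. All of the work therefore lies in breaking the cycles of $G_{R,B}$ while never letting a disk exceed radius $2$.

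To break a single cycle, I would pick a blue disk $b$ lying on it; since $b$ is on a cycle it has (at least) two red neighbours $r_1$ and $r_2$. I grow $r_1$ so that it covers both its own footprint and all of $b$, centering the enlarged disk at the midpoint $m$ of the segment $c_{r_1}c_b$ between the two centers. Because $r_1$ and $b$ intersect we have $|c_{r_1}-c_b| \le 2$, so $m$ lies within distance $1$ of each center and the disk of radius $|c_{r_1}-c_b|/2 + 1 \le 2$ centered at $m$ contains both $r_1$ and $b$ entirely. This single triangle-inequality estimate is the only piece of geometry needed, and it is exactly where the value $2$ arises, in contrast to the sharper constants of Lemmas~\ref{lem:4cycleUB} and~\ref{lem:2colorcycle}. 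After this growth the lens $b \cap r_2$ is fully covered by the enlarged $r_1$, so the shared coverage responsibility encoded by the edge $(b,r_2)$ is now redundant and that edge may be deleted from $G_{R,B}$, strictly decreasing the number of independent cycles.

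Iterating this operation, each step deleting one cycle edge at the cost of a disk of radius at most $2$, eventually turns $G_{R,B}$ into a forest, after which Lemma~\ref{lem:treestable} completes the morph onto $B$ with no further growth. Splicing the (continuous) growths together with the final forest morph yields one continuous morph from $R$ to $B$ whose disks never exceed radius $2$, and since $R$ and $B$ were arbitrary optima this bounds $\TS(k\text{-EC-minmax}, \mathcal{T}_\mathcal{I}, \mathcal{T}^k_\mathcal{S})$ by $2$ for every $k \ge 2$.

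I expect the main obstacle to be the bookkeeping of the iterative cycle-breaking rather than any single estimate. Growing $r_1$ to radius $2$ can create new intersections with other blue disks, i.e.\ new edges of $G_{R,B}$, so I must argue that the process still terminates: for instance by only ever deleting edges of the \emph{original} graph and treating newly created intersections as harmless once the enlarged disk has been committed to its target, or by fixing an order on the cycles so that each grown disk is morphed onto its blue counterpart before the next cycle is processed. Making precise that these local growths and morphs compose into one globally continuous, everywhere-valid solution of radius at most $2$ is the part that needs the most care; the geometry itself collapses to the single midpoint bound above.
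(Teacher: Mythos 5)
Your overall strategy---break all cycles of $G_{R,B}$ at cost at most $2$, then invoke Lemma~\ref{lem:treestable}---is genuinely different from the paper's proof, and in its current form it has a real gap. The paper does \emph{not} reduce to the forest case here. Instead it takes a maximal matching between adjacent red and blue vertices of $G_{R,B}$, grows every matched red disk to cover its own footprint plus its matched blue disk (radius at most $2C$, essentially your midpoint estimate), and observes that maximality does all the remaining work: an unmatched red disk and an unmatched blue disk cannot intersect, so every point lying in an unmatched red disk lies in some \emph{matched} blue disk and is therefore already covered by a grown red disk. Hence the unmatched red disks can fly freely to their (arbitrarily assigned) blue targets, and everything then shrinks. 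No intersection-graph surgery is needed, which is precisely why the paper reserves the cycle-breaking machinery (Lemmas~\ref{lem:4cycleUB} and~\ref{lem:2colorcycle}) for the sharper small-$k$ bounds.

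The gap in your argument is the one you flag yourself, but it is not mere bookkeeping. First, the edge deletions are a fiction: after you enlarge $r_1$, the \emph{actual} intersection graph of the current configuration gains edges rather than losing them, so Lemma~\ref{lem:treestable} does not apply to the final configuration as stated (and it is stated for equal-size convex \emph{translates}, which your mixed-radius configuration is not). To use it you would need a generalized ``responsibility graph'' version of the forest lemma, which you have not proved. Second, committing the enlarged $r_1$ to $b$ is unsound as described: $r_1$ lies on a cycle, so it typically has another blue neighbour $b'$, and points of $r_1\cap b'$ covered by no other red disk become uncovered the moment $r_1$ shrinks onto $b$, unless the red disk destined for $b'$ has already arrived. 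Lemma~\ref{lem:treestable} avoids exactly this by only ever moving red vertices of degree at most $1$; your $r_1$ has no such guarantee. Third, the cycle rank of $G_{R,B}$ can exceed $k$ (e.g.\ a complete bipartite intersection graph), so with one red disk consumed per broken cycle you may run out of disks, or be forced to grow one red disk toward two different blue disks, for which the radius-$2$ midpoint bound no longer applies. All three problems disappear under the paper's maximal-matching argument, which I would recommend adopting.
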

\begin{proof}
	Consider a point in time $t$ where there are two optimal solutions; let $R$ denote the solution that matches the optimal solution at $t - \varepsilon$ and $B$ the solution at $t + \varepsilon$ for arbitrarily small $\varepsilon > 0$. Let $C$ be the maximum radius of the disks in $R$ and in $B$. Furthermore, let intersection graph $G_{R,B}$ describe the above situation.
	First we make a maximal matching between red and blue vertices that are adjacent in $G_{R,B}$, implying a matching between a number of red and blue disks.
The intersection graph of the remaining red and blue disks has no edges, and we match these red and blue disks in any way. Note that the matched pairs of disks that do not coincide with intersection graph edges do not necessarily intersect.
	
	We find a bound on the topological stability as follows. All the red disks that are matched to blue disks they already intersect grow to overlap their initial disk and the matched blue disk. Now the remaining red disks, which are matched to blue disks they do not necessarily intersect, can safely move to the blue disks they are matched to, and adjust their radii to fully cover the blue disks. Finally, all red disks shrink to match the size of the blue disk they overlap to finish the morph (since each blue disk is now fully covered by the red disk that eventually morphs to be its equal). When all the red disks are overlapping blue disks, the maximum of their radii is at most $2C$, since the radius of each red disk grows by at most the radius of the blue disk it is matched to.
\end{proof}

To prove the lower bound on the topological stability ratio in the next lemma, we must give an instance of the $k$-center problem for which it is not possible to keep a stable solution close to an optimal solution. Since we consider the \emph{kinetic} $k$-center problem, such an instance consists of moving points. We prove a lower bound of $r$ by considering a static instance with two optimal solutions, that requires a stable solution to grow by a factor of $r$ with respect to the optimal solution during a morph. We then show that there is a movement of the points that forces this morph to happen, precisely when the points are in the static configuration we considered. Executing a morph at any other point in time will require the solution to grow more than a factor of $r$.

\begin{lemma}\label{lem:kcenterLB}
	$\TS(k\text{-EC-minmax}, \mathcal{T}_\mathcal{I}, \mathcal{T}^k_\mathcal{S}) \geq 2 \sin(\frac{\pi (k-1)}{2 k})$ for $k \geq 2$.
\end{lemma}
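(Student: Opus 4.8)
The plan is to promote the \emph{static} $2k$-gon construction behind Lemma~\ref{lem:2colorcycleLB} to a moving instance in which the optimum genuinely switches combinatorial type, so that every continuous algorithm is forced to break the cycle. Recall that the regular $2k$-gon has exactly the two optimal solutions $R$ and $B$ (the two adjacent-pair matchings of the $2k$-cycle), and that by Lemma~\ref{lem:2colorcycleLB} any continuous, always-covering morph from $R$ to $B$ must at some instant use a disk whose radius exceeds the optimal radius by the factor $2\sin(\frac{\pi(k-1)}{2k})$. The key design goal for the motion is that the \emph{cost of switching relative to the current optimum} is minimised exactly at the regular polygon, where it equals this factor, and is strictly larger everywhere else.

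First I would build the motion $\pi$. I interpolate between an $R$-favouring and a $B$-favouring configuration of the $2k$ points by tightening the $R$-pairs (and thereby stretching the $B$-pairs) for $t<\frac{1}{2}$, passing through the regular polygon at $t=\frac{1}{2}$, and then tightening the $B$-pairs for $t>\frac{1}{2}$; the prototype for $k=2$ is an axis-aligned rectangle of fixed semi-perimeter whose aspect ratio sweeps through the square (cf.\ Fig.~\ref{fig:LB-minmax-k}). A small tightening parameter makes the \emph{wrong} matching arbitrarily expensive at the endpoints: at $t=0$ solution $B$ has ratio tending to infinity, so any algorithm with ratio below $2\sin(\frac{\pi(k-1)}{2k})$ must be $R$-type at $t=0$, and symmetrically $B$-type at $t=1$.

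Next I would run the forcing argument against an arbitrary stable algorithm $\mathcal{A}$. The combinatorial fact I rely on, inherited from Lemma~\ref{lem:2colorcycleLB}, is that at configuration $\pi(t)$ no disk of radius below the cycle-breaking radius can cover two second-order neighbours; a pigeonhole count over the $k$ disks then shows that every valid solution cheaper than this radius induces one of the two matchings $R$ or $B$, and that these two types lie in disconnected parts of the sub-threshold region. Since $t\mapsto\mathcal{A}(\pi,t)$ is continuous, is $R$-type at $t=0$, but must be $B$-type at $t=1$, there is a time $t^\star$ at which $\mathcal{A}$ reaches the cycle-breaking radius, so its ratio at $t^\star$ is at least the ratio of the cycle-breaking radius to the optimal radius at $\pi(t^\star)$. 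Because the motion is engineered so that this quantity is minimised at the regular polygon, where Lemma~\ref{lem:2colorcycleLB} evaluates it to $2\sin(\frac{\pi(k-1)}{2k})$, we obtain $\sup_t f(\pi(t),\mathcal{A}(\pi,t))/f(\pi(t),\OPT(\pi(t)))\geq 2\sin(\frac{\pi(k-1)}{2k})$ for every $\mathcal{A}$, and hence the claimed lower bound on $\TS$.

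The main obstacle is proving that the switch-cost-to-optimum ratio is indeed minimised at the regular polygon for every $k$. For $k=2$ this reduces to the transparent inequality $\sqrt{x^2+y^2}/\min(x,y)\geq\sqrt{2}$, with equality at $x=y$, where $2x$ and $2y$ are the rectangle's side lengths; the difficulty is establishing the analogous monotonicity for the $2k$-gon tightening, together with the disconnectedness of the $R$- and $B$-components of the sub-threshold region along the \emph{entire} motion rather than only at $t=\frac{1}{2}$. I would handle the latter through the locally-constant-matching invariant described above. Finally, note that allowing $\mathcal{A}$ to be omniscient does not weaken the bound: it holds at whichever instant $\mathcal{A}$ elects to switch, precisely because the switch-ratio never drops below $2\sin(\frac{\pi(k-1)}{2k})$.
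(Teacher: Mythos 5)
Your proposal takes essentially the same route as the paper: the paper also animates the regular $2k$-gon by moving the points along tangents of the circle (alternately clockwise and counterclockwise), so that the red pairs coincide at an earlier time $t'$ (forcing the $R$-type solution there) and the blue pairs coincide at a later time $t''$ (forcing $B$-type), and then argues that switching strictly before or after the regular-polygon instant is only more expensive, so Lemma~\ref{lem:2colorcycleLB} applies at that instant. The monotonicity point you flag as the main obstacle is also handled only informally in the paper (``the red disks are still smaller and the next point to cover is further away''), so your account is faithful to the published argument.
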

\begin{proof}
The bound readily follows from Lemma~\ref{lem:2colorcycleLB}, if we can show that a set of \emph{moving} points exists that force the solution to change exactly as in Lemma~\ref{lem:2colorcycleLB}.
To this end, consider $2k$ points that move at unit speed along tangents of the unit circle. The tangents touch the unit circle at the corners of an inscribed regular $2k$-gon.
The direction of the points is alternatingly clockwise and counterclockwise with respect to the circle.
We use $t$ to indicate the time at which the moving points are the corners of the inscribed regular $2k$-gon, which are the positions required for Lemma~\ref{lem:2colorcycleLB} (see also Fig.~\ref{fig:LB-minmax-k}).
	
To see that a morph has to happen at time $t$, consider the following. At some time $t'$ before $t$ the pairs of points covered by the red disks in the construction at time $t$ are all coinciding: hence the optimal solution then has maximum radius 0; to not violate our bound, we must have this solution at that time. Morphing $R$ into $B$ between $t'$ and $t$ requires a red disk to grow its radius more than a factor $2 \sin(\frac{\pi (k-1)}{2 k})$, since the red disks are still smaller than at time $t$, while the disks in (combinatorial) solution $B$ are still larger. Analogously, we can argue that we must morph to the blue solution, before a time $t''$ at which the pairs covered by blue disks in the construction coincide. We conclude that the morph has to happen at time $t$ and thus requires the maximum radius to grow by a factor $2 \sin(\frac{\pi (k-1)}{2 k})$ by Lemma~\ref{lem:2colorcycleLB}.
\end{proof}

%The tools we introduced so far allow us to prove a tight bound on the topological stability of kinetic minmax 2-centers.
\begin{lemma}	
    $\TS(2\text{-EC-minmax}, \mathcal{T}_\mathcal{I}, \mathcal{T}^2_\mathcal{S}) = \sqrt{2}$.
\end{lemma}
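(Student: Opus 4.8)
The plan is to establish the two matching inequalities $\TS(2\text{-EC-minmax},\mathcal{T}_\mathcal{I},\mathcal{T}^2_\mathcal{S}) \geq \sqrt{2}$ and $\TS(2\text{-EC-minmax},\mathcal{T}_\mathcal{I},\mathcal{T}^2_\mathcal{S}) \leq \sqrt{2}$ separately. The lower bound comes essentially for free: specializing Lemma~\ref{lem:kcenterLB} to $k=2$ gives $2\sin\!\big(\tfrac{\pi(k-1)}{2k}\big) = 2\sin\!\big(\tfrac{\pi}{4}\big) = \sqrt{2}$, so the regular $4$-gon construction of Lemma~\ref{lem:2colorcycleLB}, set in motion as in Lemma~\ref{lem:kcenterLB}, already forces a swap that costs a factor $\sqrt{2}$. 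Hence the only real work is the matching upper bound.

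For the upper bound I would follow the template of Lemma~\ref{lem:trivupperbound}: it suffices to show that at any time $t$ where two optimal solutions coexist, one optimal solution $R$ can be morphed into the other optimal solution $B$ (both of radius $C$, which I normalize to $1$) while covering $P$ throughout and never exceeding radius $\sqrt{2}$. Because topological stability permits arbitrarily fast motion, this morph can be performed in a vanishing time window around $t$, during which the optimum stays $\approx C$, so the incurred ratio is $\leq \sqrt{2}$. The key observation is that for $k=2$ the intersection graph $G_{R,B}$ has exactly two red and two blue vertices with edges only between differently colored shapes, so up to isomorphism it is either a forest or the single possible cycle, the $4$-cycle $r_1\,b_1\,r_2\,b_2$.

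In the forest case, Lemma~\ref{lem:treestable} morphs $R$ onto $B$ without ever increasing any radius, giving ratio $1$. In the $4$-cycle case both red vertices have degree $2$, so Lemma~\ref{lem:4cycleUB} applies directly: I grow one red disk by a factor $\sqrt{2}$ so that it absorbs the intersection the cycle forced its partner to share, which drops an edge and turns $G_{R,B}$ into a forest. From there the morph is completed as in the forest case, and since every radius stays at $1$ except the single disk that momentarily reached $\sqrt{2}$, the overall maximum factor is exactly $\sqrt{2}$. Note that I deliberately use the sharper $4$-cycle bound of Lemma~\ref{lem:4cycleUB} rather than the generic cycle bound of Lemma~\ref{lem:2colorcycle}, which for $k=2$ would only yield the weaker $(1+\sqrt{5})/2$.

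The step I expect to need the most care is the clean-up after breaking the cycle. Once one red disk has grown to radius $\sqrt{2}$, the shapes are no longer equal-sized translates, so Lemma~\ref{lem:treestable} does not apply verbatim; I must check that sliding this enlarged disk onto its target blue disk (shrinking it back to radius $1$) while the second red disk slides onto the remaining blue disk keeps $P$ covered at all times and never pushes any radius above $\sqrt{2}$. Fortunately, the residual configuration for $k=2$ is a single concrete broken $4$-cycle consisting of one enlarged red disk, one unit red disk, and two unit blue disks, so I would verify coverage by a direct geometric argument on this small instance rather than invoking the general lemma. Combining the two bounds establishes the equality and, together with the preceding lemmata, the first item of Theorem~\ref{thm:bounds}.
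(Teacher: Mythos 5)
Your proposal is correct and follows essentially the same route as the paper: lower bound from Lemma~\ref{lem:kcenterLB} at $k=2$, and for the upper bound a case split on whether $G_{R,B}$ is a forest (Lemma~\ref{lem:treestable}) or the $4$-cycle (break it via Lemma~\ref{lem:4cycleUB}, then finish as a forest). Your added caveat that Lemma~\ref{lem:treestable} does not apply verbatim once one disk has grown to radius $\sqrt{2}$ (the shapes are no longer translates) is a legitimate point of care that the paper itself glosses over, but it does not change the argument.
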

\begin{proof}
	The lower bound follows directly from Lemma~\ref{lem:kcenterLB} by using $k=2$.
For the upper bound, consider a point in time $t$ where there are two optimal solutions; let $R$ denote the solution that matches the optimal solution at $t - \varepsilon$ and $B$ the solution at $t + \varepsilon$ for arbitrarily small $\varepsilon > 0$.
	If $G_{R,B}$ is a forest, Lemma~\ref{lem:treestable} applies and we do not need to increase the maximum radius during the morph.
	If $G_{R,B}$ contains a cycle, the entire graph must be a $4$-cycle, since in a cycle red and blue alternate and there is at most a single edge between any pair of vertices.
	Lemma~\ref{lem:4cycleUB} gives an upper bound of $\sqrt{2}$ for transforming the intersection graph $G_{R,B}$ to no longer have this $4$-cycle, resulting in a tree. Finally, we can morph $R$ into $B$ without further increasing the maximum radius using Lemma~\ref{lem:treestable}.	
\end{proof}

\begin{lemma}
	$\sqrt{3} \leq \TS(3\text{-EC-minmax}, \mathcal{T}_\mathcal{I}, \mathcal{T}^3_\mathcal{S}) \leq \big(1+\sqrt{7}\big)/2$.
\end{lemma}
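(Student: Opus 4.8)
The plan is to obtain the two bounds separately, reading each off the cycle lemmas specialized to $k=3$. For the lower bound I would simply invoke Lemma~\ref{lem:kcenterLB} with $k=3$, which gives $\TS \ge 2\sin\!\big(\tfrac{\pi(3-1)}{2\cdot 3}\big) = 2\sin\!\big(\tfrac{\pi}{3}\big) = \sqrt{3}$. Since that lemma already exhibits a set of \emph{moving} points whose optimal solution is forced to swap at the critical time, nothing further is needed and the lower bound is immediate.

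For the upper bound I would fix a time $t$ with two optimal solutions, take $R$ and $B$ to be the optima just before and just after $t$ (as in the setup of Lemma~\ref{lem:trivupperbound}), scale so that all disks have radius $1$, and analyze the $2$-colored intersection graph $G_{R,B}$, which is bipartite with three red and three blue vertices. The argument splits on whether $G_{R,B}$ is a forest. If it is, Lemma~\ref{lem:treestable} morphs $R$ onto $B$ with ratio $1 \le (1+\sqrt7)/2$. Otherwise $G_{R,B}$ contains an even cycle, necessarily of length $4$ or $6$. A short counting argument (any pair of blue vertices may share at most one common red neighbour without creating a $4$-cycle, so a $4$-cycle-free bipartite graph on $3+3$ vertices with a cycle can only be the $6$-cycle itself) shows that every cyclic case is either exactly a $6$-cycle or contains a $4$-cycle.

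The dominant case is when $G_{R,B}$ is exactly a $6$-cycle, i.e.\ every vertex has degree $2$. Here Lemma~\ref{lem:2colorcycle} applies and, evaluated at $k=3$ using $\cos^2(\pi/6)=\tfrac34$, yields precisely the factor $\big(1+\sqrt{1+8\cdot\tfrac34}\big)/2 = \big(1+\sqrt7\big)/2$, matching the lower bound of Lemma~\ref{lem:kcenterLB} up to the gap between $\sqrt3$ and $(1+\sqrt7)/2$. In every remaining cyclic case $G_{R,B}$ contains a $4$-cycle, and I would delete an edge of it via Lemma~\ref{lem:4cycleUB} at the cost of growing a single red disk by a factor $\sqrt2$; deleting a cycle edge strictly lowers the cyclomatic number, so iterating reduces $G_{R,B}$ to a forest, after which Lemma~\ref{lem:treestable} completes the morph exactly as in the $k=2$ proof. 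Because each deletion enlarges a distinct red disk from its original radius and the subsequent forest-morph never grows a disk, the radii stay at or below $\sqrt2 < (1+\sqrt7)/2$ throughout. Combining the cases gives $\TS \le (1+\sqrt7)/2$.

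The main obstacle I anticipate is verifying that the precondition of Lemma~\ref{lem:4cycleUB}---a $4$-cycle containing a red degree-$2$ vertex---is genuinely available in \emph{every} cyclic-but-not-$6$-cycle configuration. This is clear whenever some red vertex of degree $2$ lies on a $4$-cycle, but dense graphs defeat it: in $K_{3,3}$ every vertex has degree $3$, and in a $K_{2,3}$-type pattern the two high-degree red vertices carry all the $4$-cycles while the third red vertex is a leaf, so no $4$-cycle has a red degree-$2$ vertex. For these I would need either to argue that such intersection patterns cannot arise from two \emph{optimal} equal-radius solutions covering the same point set, or to give a direct geometric bound showing that such heavily overlapping configurations admit a morph no more expensive than $(1+\sqrt7)/2$ (plausibly cheaper, since the disks already overlap strongly). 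Resolving precisely which dense bipartite graphs on $3+3$ vertices are realizable by optimal $3$-EC-minmax solutions, and bounding the morph cost on the ones that survive, is the crux of the upper bound.
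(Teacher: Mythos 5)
Your skeleton matches the paper's: the lower bound is indeed just Lemma~\ref{lem:kcenterLB} at $k=3$, and the upper bound splits into forest (Lemma~\ref{lem:treestable}), pure $6$-cycle (Lemma~\ref{lem:2colorcycle} at $k=3$, giving $(1+\sqrt{7})/2$), and configurations containing a $4$-cycle. But the difficulty you flag at the end is a genuine gap, not a formality: you have correctly located the crux and then left it unresolved. The paper does not dodge the dense cases --- it handles them head-on with a distance argument. When there are two or more red degree-$3$ vertices, it considers the center distance $d$ of each overlapping red--blue pair: if some pair has $1+d/2\le(1+\sqrt7)/2$, the red disk grows to swallow that blue disk entirely, which removes all of that blue disk's other incidences and drops the remaining red vertices to degree at most $2$, after which Lemma~\ref{lem:4cycleUB} applies to any surviving $4$-cycle. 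If instead every pair has $d>\sqrt7-1$, then the red disks that each meet all three blue disks must have centers within $2-d$ of one another, so one red disk of radius $1+(2-d)/2<(1+\sqrt7)/2$ can absorb another, freeing the absorbed red disk to relocate onto a blue disk and again reducing degrees. Without some such argument your proof simply does not cover the $K_{2,3}$- and $K_{3,3}$-type intersection graphs, and you give no reason to believe they are unrealizable.

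There is a second, smaller gap in your iteration claim for the sparse cyclic case. Deleting one edge of a $4$-cycle via Lemma~\ref{lem:4cycleUB} can leave a $6$-cycle (e.g.\ two $4$-cycles sharing an edge have cyclomatic number $2$), at which point Lemma~\ref{lem:4cycleUB} no longer applies and Lemma~\ref{lem:2colorcycle} would have to grow a disk that may coincide with the one already grown to $\sqrt2$; your assertion that ``each deletion enlarges a distinct red disk'' is not justified. The paper sidesteps this by observing that if breaking the first $4$-cycle would create a $6$-cycle, then two adjacent disks of that $4$-cycle have centers at distance less than $\sqrt2<\sqrt7-1$, so a red disk of radius at most $1+\sqrt2/2<(1+\sqrt7)/2$ can cover the pair outright and the result is already a tree. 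You would need this (or an equivalent) observation to make the induction on the cyclomatic number go through.
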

\begin{proof}
%\noindent{\color{darkgray}\sffamily\bfseries Proof.}
	Consider a point in time $t$ where there are two optimal solutions; let $R$ denote the solution that matches the optimal solution at $t - \varepsilon$ and $B$ the solution at $t + \varepsilon$ for arbitrarily small $\varepsilon > 0$.
	If intersection graph $G_{R,B}$ is a forest, then Lemma~\ref{lem:treestable} applies and we do not need to increase the maximum radius during the morph.
	If $G_{R,B}$ contains a cycle, then either the entire graph is a $6$-cycle, or there are smaller cycles.
	If the entire graph is a $6$-cycle, the upper bound follows from Lemma~\ref{lem:2colorcycle} and the lower bound from Lemma~\ref{lem:kcenterLB}.
	
	Consider the case where $G_{R,B}$ contains a cycle, but no $6$-cycle. There is at least one $4$-cycle. As $k=3$, every vertex has degree at most $3$. Note that two overlapping disks can be covered by a single disk without increasing the maximum radius beyond $\big(1+\sqrt{7}\big)/2$, if $1+d/2 \leq \big(1+\sqrt{7}\big)/2$, where $d$ is the distance between the centers of the disks. We now distinguish the following cases:
	\begin{itemize}
		\item If there is at most one red degree-$3$ vertex, every $4$-cycle contains at least one degree-$2$ red vertex. Therefore, Lemma~\ref{lem:4cycleUB} can be used to break one of the $4$-cycles by increasing the radius of a red disk by at most $\sqrt{2}$. If $G_{R,B}$ has another $4$-cycle after breaking the first one, we can apply Lemma~\ref{lem:4cycleUB} again. However, if breaking the first $4$-cycle resulted in a $6$-cycle, then the distance between the center points of two adjacent disks in the $4$-cycle was less than $\sqrt{2}<\sqrt{7}-1$. We can therefore fully cover this pair of adjacent disks with the red disk instead of breaking the $4$-cycle. We need a radius of at most $1+\sqrt{2}/2<1+(\sqrt{7}-1)/2=\big(1+\sqrt{7}\big)/2$ for this. Since the red disk now covers all intersections of the blue disk, the resulting intersection graph is a tree.
		\item If there are two or more red degree-$3$ vertices, we look at the distances $d$ between the center points of overlapping disks. If there is a pair of red and blue disks for which $1+d/2 \leq \big(1+\sqrt{7}\big)/2$, we can fully cover the blue disk with the red disk. The remainder of the red disks can now be seen as vertices of degree-$2$ or less in the intersection graph. If there is still a $4$-cycle, then Lemma~\ref{lem:4cycleUB} can be used to break the cycle. If for every pair of red and blue disks $1+d/2 > \big(1+\sqrt{7}\big)/2$ holds,  then the centers of the red disks that\smallskip

\begin{minipage}{0.6\linewidth}
	 overlap the three blue disks can be at most $2-d$ away from each other (see figure). We can cover two red disks $r_1,r_2$ with a single red disk $r_1$ of radius $1+(2-d)/2<1 - \big(1+\sqrt{7}\big)/4<\big(1+\sqrt{7}\big)/2$. We can then freely transform red disk $r_2$ to a blue disk.  Again the remainder of the red disks can now be seen as vertices of degree-$2$ or less in the intersection graph. \phantom{If}
\end{minipage}
\begin{minipage}{0.4\linewidth}
	\centering
	\raisebox{0pt}[\dimexpr\height-1.5\baselineskip\relax]{\includegraphics{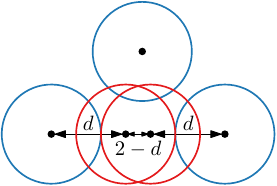}}
\end{minipage}

%Again the
\vspace{-7.5pt}
If a $4$-cycle remains, then Lemma~\ref{lem:4cycleUB} can be used to break the cycle.
		
% 		\begin{figure}
% 			\begin{minipage}{0.5\linewidth}
% 				\includegraphics{UB-3centers}
% 			\end{minipage}
% 			\begin{minipage}{0.5\linewidth}
% 				\caption{Multiple red disks overlap three blue disks. The distance between the center points of the red disks is at most $2-d$, when the distance between the center points of the red and blue pairs is at least $d$.}
% 				\label{fig:UB-3centers}
% 			\end{minipage}
% 		\end{figure}
	\end{itemize}
	\smallskip In both cases $G_{R,B}$ consists of trees after the changes that were made, thus $R$ can morph into $B$ without further increasing the maximum radius by Lemma~\ref{lem:treestable}.
	In all the above cases we need to grow the maximum radius during the transformation from $R$ to $B$ by at most a factor $\big(1+\sqrt{7}\big)/2$, while in some cases it is necessary to grow to by a factor of $\sqrt{3}$, for example to resolve a $6$-cycle.%~\qed
    \end{proof}

The above proof shows the strengths and weaknesses of the earlier lemmata. While in many cases we can get close to tight bounds, dealing with high degree vertices in the intersection graph requires additional analysis.
Furthermore, in general we cannot upper bound the approximation factor needed for stable solutions with bounded speed~\cite{durocher2006geometric},  but Theorem~\ref{thm:bounds} act as lower bounds for such bounded speed solutions.

\subparagraph{Rectilinear and minsum cases}
We now turn to the remaining cases.
As it turns out, the stability ratio is $2$ for any value of $k \geq 2$, as captured in the following theorems.

\begin{theorem}\label{thm:RectMinMax}
$\TS(k\text{-RC-minmax}, \mathcal{T}_\mathcal{I}, \mathcal{T}^k_\mathcal{S}) = 2$ for $k \geq 2$.
\end{theorem}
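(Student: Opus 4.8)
The plan is to prove $\TS(k\text{-RC-minmax}) = 2$ by establishing the two inequalities separately. For the upper bound $\TS \le 2$, I would reuse the machinery already built for the Euclidean case. Lemma~\ref{lem:treestable} is stated for arbitrary convex translates, so it applies verbatim to equal-radius axis-aligned squares, and the matching-and-merging argument of Lemma~\ref{lem:trivupperbound} carries over unchanged: its only geometric ingredient is that a shape can grow to cover both itself and one intersecting equal-radius partner at the cost of at most doubling its radius. For two axis-aligned squares of radius $C$ that intersect we have $\|c_1-c_2\|_\infty \le 2C$, so the square centered at the midpoint of the centers needs radius $\max(|x_1-x_2|/2,|y_1-y_2|/2)+C \le 2C$. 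Hence the same morph keeps the maximum radius at most $2C$ and yields $\TS \le 2$ for every $k \ge 2$.

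The real work is the matching lower bound $\TS \ge 2$. Here I would introduce a \emph{diamond} gadget consisting of four points $a=(-2,0)$, $b=(0,2)$, $c=(2,0)$, $d=(0,-2)$. Consecutive points lie at $L_\infty$-distance $2$, whereas the two pairs $\{a,c\}$ and $\{b,d\}$ lie at distance $4$. Consequently there are two optimal radius-$1$ solutions, red $\{a,b\},\{c,d\}$ and blue $\{b,c\},\{a,d\}$, whose intersection graph is a $4$-cycle of degree-$2$ vertices. Crucially, every one of the four triples of these points has an axis-aligned bounding box of side $4$, so covering any three of them requires radius $2$. This property is the heart of the bound: it defeats the cheap reconfiguration available to looser gadgets (where three points still fit in a single small square), so no square can ``park'' three diamond points during a handoff.

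I then force the swap with a tangential motion in the spirit of Lemma~\ref{lem:kcenterLB}: the points move at unit speed along the tangents of the circumscribed circle with velocities $v_a=1$, $v_b=-1$, $v_c=-1$, $v_d=1$, so that at time $s=2$ the red pairs coincide, at $s=-2$ the blue pairs coincide, and $s=0$ recovers the diamond. Near $s=2$ the only near-optimal solution is the red grouping and near $s=-2$ it is the blue grouping, so any topologically stable, validity-preserving algorithm must swap somewhere in $(-2,2)$. To bound the cost of this unavoidable swap I argue by contradiction: if every square stayed below radius $2$ (side below $4$), no square could contain both $a,c$ nor both $b,d$, so each square must hold exactly one of $\{a,c\}$ and one of $\{b,d\}$, forcing the grouping to be red-type or blue-type at all times with each point in exactly one square. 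Continuity then yields a handoff instant at which $b$ passes between the two squares; at that instant $d$ lies in one of them, so that square covers a triple containing $\{b,d\}$ and has radius $\ge 2$, a contradiction. At $s=0$ this costs exactly factor $2$ (radius $2$ against optimum $1$), and the shrinking pairs make every other swap time at least as expensive, giving $\TS \ge 2$. For general $k$, I add $k-2$ stationary pairs at distance $2$, placed mutually far apart and far from the diamond; they rigidly consume $k-2$ dedicated radius-$1$ squares and leave exactly two for the diamond, so the same argument applies.

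The step I expect to be the main obstacle is the continuity bookkeeping in the handoff: making precise, despite overlapping coverage and the possibility of a point lying momentarily on a square boundary, that the red-to-blue transition genuinely forces some square to engulf a triple. I would handle this by working with the closed coverage condition and examining the first time the indicator ``$a$ and $b$ lie in a common square'' changes value, then reading off the forced radius by continuity of the (finitely many) square positions and radii. A secondary point to pin down is that the $k-2$ padding pairs cannot be ``borrowed'' to help the swap; this follows from placing them far enough that any square covering one padding point together with a diamond point would itself be enormous, so the assignment of one square per padding pair is effectively forced throughout the motion.
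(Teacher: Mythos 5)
Your proposal is correct and follows essentially the same route as the paper: the upper bound via the doubling/matching argument of Lemma~\ref{lem:trivupperbound} (which indeed transfers verbatim to the $L_\infty$ metric), and the lower bound via the same four-point diamond moving along tangents of its circumscribed circle, so that the red-to-blue handoff forces some square to engulf three points at twice the optimal radius. The only detail to watch is that once you add the $k-2$ padding pairs the optimum no longer tends to $0$ near $s = \pm 2$, so the swap is forced not by the coinciding pairs but by the opposite pairs growing beyond $L_\infty$-distance $4$ for $|s| > 2$; since a forced triple still costs radius at least $2$ against an optimum of $1$ there, the factor-$2$ bound goes through unchanged.
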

\begin{proof}
The upper bound readily follows from the argument of Lemma~\ref{lem:trivupperbound}.
We prove the lower bound for $k = 2$, understanding that higher values of $k$ cannot lead to a weaker lower bound.

Consider an instance consisting of four points: two points move with unit speed over the lines $y=1$ and $y=-1$ respectively, in opposite directions, while the other two points move with unit speed along the lines $x=-1$ and $x=1$ in opposite directions. Assume that at some time $t$ the points are in the positions $(0,1),(1,0),(0,-1),(-1,0)$. There is exactly one optimum solution for this instance before $t$ and exactly one optimum solution after $t$. However, at time $t$ there are two possible optimum solutions (see Fig.~\ref{fig:LB-minmax-rect-appendix}(a)).
	
To ensure that the squares together cover all points at all times, and that the centers of the squares move in continuous fashion, one of the squares has to grow to cover at least three of the points. After this has happened the second square can move in position of the other optimum solution, followed by shrinking the square that covers three points. To cover at least three points, one of the squares has to grow its radius ($r=1$) to two times the size of the maximum radius of any of the optimum solutions ($r=\frac{1}{2}$) (see Fig.~\ref{fig:LB-minmax-rect-appendix}(b)).
	
	\begin{figure}
		\centering
		\includegraphics{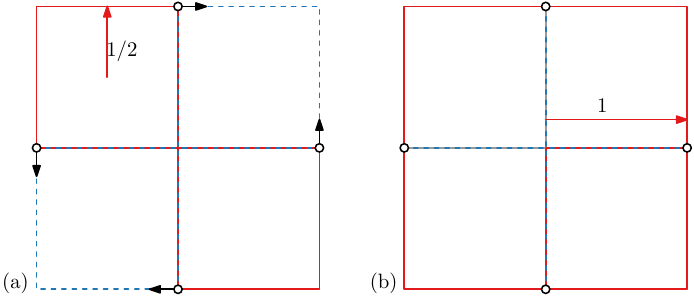}
		\caption{Lower bound construction for the kinetic minmax rectilinear 2-center problem. (a) The optimal solution changes from the solid red to the dashed blue solution, both having radius $1/2$. (b) Continuous transformation requires a maximum radius of $1$, to cover at least three points simultaneously using the large black square.}
		\label{fig:LB-minmax-rect-appendix}
	\end{figure}
	
At any point in time, covering three points with a single square, requires a square of radius at least $1$. Moreover, the radius of the largest square in an optimum solution is smaller than $\frac{1}{2}$ before and after $t$, while this radius is exactly $\frac{1}{2}$ at time $t$. Hence the ratio between the radius of a square covering three points (during any transition) and the optimum radius, is at its smallest at time $t$, resulting a lower bound of $2$ on the topological stability ratio.	
\end{proof}

\newpage
\begin{theorem}\label{thm:MinSum}
%Let $\mathcal{T}^k_\mathcal{S}$ denote the topology on $\mathcal{S}$ defined by $k$ disks or squares with continuously moving centers.
%Then
$\TS(k\text{-EC-minsum}, \mathcal{T}_\mathcal{I}, \mathcal{T}^k_\mathcal{S}) = \TS(k\text{-RC-minsum}, \mathcal{T}_\mathcal{I}, \mathcal{T}^k_\mathcal{S}) = 2$ for $k \geq 2$.
\end{theorem}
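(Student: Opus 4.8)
The statement bundles an upper and a lower bound for each of the two minsum variants, so the plan is to prove a single upper bound of $2$ and a single lower bound of $2$, each argument working verbatim for the Euclidean and the rectilinear metric. The upper bound should reuse the matching morph of Lemma~\ref{lem:trivupperbound}, now accounting for the \emph{sum} of radii instead of the maximum. The lower bound is where the real work lies: the minmax gadgets (the diamond of Theorem~\ref{thm:RectMinMax} and the regular $2k$-gon of Lemma~\ref{lem:kcenterLB}) all collapse to ratio $1$ for minsum, because a single large disk covering several clusters has the \emph{same} sum as many small disks; hence I need a genuinely new construction.

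\textbf{Upper bound.} Given the two optimal solutions $R,B$ at a transition time, I would take a maximal matching of intersecting red--blue pairs and match the leftovers arbitrarily, exactly as in Lemma~\ref{lem:trivupperbound}. First grow every matched red disk $r_i$ to the smallest disk covering $r_i\cup b_i$; since $r_i$ and $b_i$ intersect, this disk has radius at most $\rho^R_i+\rho^B_i$. Then translate each unmatched red disk onto its (non-intersecting) blue partner, and finally shrink all grown disks onto the blue disks. Coverage is maintained throughout: growing and shrinking are monotone, and while the unmatched disks move, every point they carried is \emph{redundantly} covered, because a point of an unmatched red disk lies in some blue disk that must itself be matched (two unmatched disks of opposite colour cannot intersect), hence inside a stationary grown disk. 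At every instant each disk has radius at most $\rho^R_i+\rho^B_i$, so the sum is at most $\sum_i(\rho^R_i+\rho^B_i)=\OPT_R+\OPT_B=2\,\OPT$. Disks and squares are both convex translates, so the argument is metric-independent.

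\textbf{Lower bound.} I would place three points $A,B,C$ on a horizontal line with $B$ at the origin and $A,C$ translating rightwards at equal speed, so that the gap $AC$ stays fixed at $2a$ while $A,B$ coincide at one extreme time and $B,C$ coincide at the other. At those extreme times two points coincide, so $\OPT=0$ and, exactly as in Lemma~\ref{lem:kcenterLB}, any bounded-ratio algorithm is forced into the corresponding merge (covering $\{A,B\}$ at one end and $\{B,C\}$ at the other); at the symmetric middle time $A=(-a,0),B=(0,0),C=(a,0)$ we have $\OPT=a/2$. A continuous algorithm must therefore move $B$'s partner from $C$ to $A$. Tracking which of the pairs $\{A,B\},\{B,C\},\{A,C\}$ are co-covered, two disks can never leave all three points mutually separated, so at the switch either one disk spans the distance-$2a$ pair $\{A,C\}$, or two disks cover $\{A,B\}$ and $\{B,C\}$ respectively; in both cases the sum is at least $a$. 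Writing $\OPT(s)=\tfrac{a}{2}(1-|s|)$ for the configuration at time $s$ and using $AC\equiv 2a$, making the switch at time $s$ costs ratio at least $2/(1-|s|)\ge 2$, with equality only at the middle time. Pinning the remaining $k-2$ disks to far-apart static points extends this to every $k\ge 2$, and because the points are collinear the $L_2$ and $L_\infty$ radii coincide, covering both metrics at once.

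\textbf{Main obstacle.} The difficulty is entirely in the lower bound. First, one must find a configuration in which the minsum optimum is genuinely forced to split the work across disks, so that growing one disk cannot be compensated by shrinking another --- this is what defeats the minmax gadgets and motivates the three-collinear-point idea with a constant long diagonal $AC=2a$. Second, one must upgrade the intuitive claim "$B$'s partner cannot be swapped for free" into a rigorous continuity argument: the co-coverage predicates are closed, they cannot all fail simultaneously (three points, two disks), and hence the switch realises a sum of at least $a$ at \emph{some} time which, since $AC$ is constant, cannot be hidden at a moment where $\OPT$ is large. For the upper bound the only subtlety is the redundant-coverage observation that legitimises translating the unmatched disks.
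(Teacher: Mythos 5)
Your proposal is correct and follows essentially the same route as the paper: the upper bound is the maximal-matching grow--move--shrink morph of Lemma~\ref{lem:minsumUB} with the per-disk bound $\rho^R_i+\rho^B_i$ summed to $2C$, and the lower bound is exactly the paper's three-collinear-points construction (static middle point, two outer points moving in tandem at constant separation, padded with $k-2$ far-away static points), with the same one-dimensionality remark handling the rectilinear case. Your quantitative ratio $2/(1-|s|)$ and the closed-set argument for the forced co-coverage at the switch are a slightly more explicit rendering of the paper's informal timing discussion, but not a different proof.
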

\begin{proof}
The upper bound for the Euclidean case follows from Lemmata~\ref{lem:minsumUB}~and~\ref{lem:minsumLB} below. The proofs only use the triangle inequality and therefore work for general metrics, in particular they work for the rectilinear case.

The lower bound construction for $k = 2$ uses three points: $(0,-1), (0,0), (0,1)$, admitting two optimal solutions, both with a disk of radius $1/2$ and a disk of radius 0 (see Figure~\ref{fig:LB-minsum-eucl-appendix}). Morphing between these requires an intermediate state that double-covers $(0,0)$, or one disk covering all three: the total radius is then $1$. Since the lower bound construction is essentially one dimensional, the disks can be interchanged by squares, and thus works for both the Euclidean and the rectilinear case. 
\end{proof}

\begin{lemma}\label{lem:minsumUB}
	$\TS(k\text{-EC-minsum}, \mathcal{T}_\mathcal{I}, \mathcal{T}^k_\mathcal{S}) \leq 2$ for $k\geq 2$.
\end{lemma}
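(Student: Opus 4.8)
The plan is to mirror the factor-$2$ argument of Lemma~\ref{lem:trivupperbound}, but to control the \emph{sum} of radii rather than their maximum. As in that proof it suffices to bound a single static transformation: fix a critical time $t$, let $R$ be the optimal solution matching the solution at $t-\eps$ and $B$ the one at $t+\eps$, and observe that both have the same optimal minsum cost $\Sigma = \sum_{r\in R}\mathrm{rad}(r) = \sum_{b\in B}\mathrm{rad}(b)$. Because the points may move arbitrarily fast, I freeze the input at time $t$ and only need to exhibit a continuous morph from $R$ to $B$ that keeps $P$ covered while the total radius never exceeds $2\Sigma$.

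The key building block is a coverage-preserving way to turn one red disk into one blue disk at bounded cost. I form the 2-colored intersection graph $G_{R,B}$ and match intersecting red--blue pairs. For a matched pair $(r,b)$ at center distance $d$, I first grow $r$ into a disk centered on the segment between the two centers that contains both $\mathrm{disk}(r)$ and $\mathrm{disk}(b)$, and then shrink this disk onto $b$. If the grow phase is parametrized so the moving disk always contains $\mathrm{disk}(r)$ (radius $|c(s)-c_r|+\mathrm{rad}(r)$ at center $c(s)$) and the shrink phase so it always contains $\mathrm{disk}(b)$, then the configuration contains all of $R$ throughout the first phase and all of $B$ throughout the second, so $P$ stays covered with no ordering subtleties at all. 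Since $r$ and $b$ intersect we have $d\le \mathrm{rad}(r)+\mathrm{rad}(b)$, so the bridging disk has radius $(d+\mathrm{rad}(r)+\mathrm{rad}(b))/2\le \mathrm{rad}(r)+\mathrm{rad}(b)$. Performing the grow phase for all matched pairs simultaneously, the peak total radius is at most $\sum_{(r,b)}(\mathrm{rad}(r)+\mathrm{rad}(b))\le \Sigma_R+\Sigma_B = 2\Sigma$. These estimates use only the triangle inequality and the existence of a smallest enclosing ball of two balls, so the identical construction yields squares in the rectilinear case, which is why the same lemma covers $k$-RC-minsum.

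The step I expect to be the main obstacle is the \emph{leftover} disks. A matching of intersecting pairs need not be perfect, so some red disks (and equally many blue disks) may remain, and an arbitrary pairing of them can be non-intersecting and far apart, for which the bridging disk is too large to respect the $2\Sigma$ budget. Here I cannot invoke Lemma~\ref{lem:treestable}, since it requires congruent translates whereas minsum disks have differing radii, and I also cannot simultaneously introduce $2k$ disks. The intended resolution is a redundancy argument exploiting that $R$ and $B$ have equal cost: any point covered by a leftover red disk $r$ is covered in $B$ by a blue disk other than its partner (a shared point would force an intersection), so after the matched blues are installed the leftover reds become redundant and can be shrunk to radius $0$, translated to their targets at zero cost, and regrown. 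Making this precise---in particular ordering the retirement of leftover disks so that every intermediate mixed configuration $\{b_1,\dots,b_j,r_{j+1},\dots,r_k\}$ is still a valid cover---is the crux, and it is exactly the place where unequal radii block a direct reuse of the forest-based morph.
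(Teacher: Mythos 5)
Your proposal is correct and follows essentially the same route as the paper: a maximal matching of intersecting red--blue pairs, a bridging disk of radius at most $\mathrm{rad}(r)+\mathrm{rad}(b)$ for each matched pair, and relocation of the unmatched red disks to arbitrarily assigned blue targets, giving peak total radius at most $2\Sigma$. The ``crux'' you flag in fact dissolves via the maximality argument you already sketch: any point in an unmatched red disk lies in some blue disk that must be \emph{matched} (an unmatched blue sharing a point with an unmatched red would let the matching be extended), so once the matched reds have grown, all leftover reds are simultaneously redundant and no retirement ordering is needed.
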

\begin{proof}
	Consider a point in time $t$ where there are two optimal solutions; let $R$ denote the solution that matches the optimal solution at $t-\varepsilon$ and $B$ the optimal solution at $t+\varepsilon$ for arbitrarily small $\varepsilon>0$. Let $C$ be the sum of radii of $R$, and equivalently the sum of radii of $B$. We morph $R$ onto $B$ while covering all the points with a sum of radii of at most $2C$.
	
	First make a matching between disks in $R$ and disks in $B$. If we look at the intersection graph $G_{R, B}$, we want to create a matching between red and blue vertices that are adjacent. Finding such a matching is easy for cycles since they have as many red disks as blue disks. However, in some cases we might not be able to find a matching between overlapping disks. When $G_{R, B}$ is a forest there can be trees that have more red disks than blue disks and vice versa. In these cases we map the remaining red disks (leaves in $G_{R, B}$) to the remaining blue disks (also leaves in $G_{R, B}$) arbitrarily.
	
	We find a bound on the topological stability as follows. All the red disks that are matched to blue disks they already intersect grow to overlap their initial disk and the blue disk they are matched to. Now the remaining red disks can safely move to the blue disks they are matched to, and adjust their radius to fully cover the blue disks.
Finally, all red disks shrink to match the size of the blue disk they overlap to finish the morph.
When all the red disks are overlapping blue disks, the sum of radii is at most $2C$, since the radius $C_r$ of each red disk $r$ grows by at most the radius $C_b$ of the blue disk $b$ it is matched to.
\end{proof}

\newpage
\begin{lemma}\label{lem:minsumLB}
	$\TS(k\text{-EC-minsum}, \mathcal{T}_\mathcal{I}, \mathcal{T}^k_\mathcal{S}) \geq 2$ for $k\geq 2$.
\end{lemma}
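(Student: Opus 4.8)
The plan is to promote the static three-point gadget from the theorem statement into a \emph{path} of moving points that forces any topologically stable algorithm through a state of twice-optimal cost. I would keep $p_1 = (0,1)$ and $p_3 = (0,-1)$ fixed and let the middle point $p_2 = (0,y_2(t))$ slide down the $y$-axis, with $y_2(t') = 1$ at some time $t' < t$, then $y_2(t) = 0$, and $y_2(t'') = -1$ at some $t'' > t$. A direct case check over the possible pairings ($p_1p_2$, $p_2p_3$, or $p_1p_3$) shows the optimal minsum cost equals $(1 - |y_2|)/2$: it is a ``tent'' that vanishes at $t'$ and $t''$ and peaks at $1/2$ at time $t$, exactly where the two optimal solutions of the theorem statement coexist.

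Next I would pin down the algorithm at the two ends. Assume for contradiction a topologically stable algorithm $\mathcal{A}$ with ratio $\rho < 2$. At $t'$ the three points collapse onto the two locations $(0,1)$ and $(0,-1)$, so the optimum is $0$ and hence $\mathcal{A}$ must also have total radius $0$: its two disks are radius-$0$ points sitting exactly on the two clusters. The same holds at $t''$. Tracking the two disks $D_1, D_2$ by continuity from $t'$, this forces the covering \emph{assignment} of $p_2$ (and of $p_1$) to differ between $t'$ and $t''$: whichever disk covers $p_2$ at $t'$ is not the one covering $p_2$ at $t''$, or, in the swapped case, the same statement holds for $p_1$.

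The heart of the argument is then a connectedness step. For the point whose assignment flips, say $p_2$, the time sets $\{\tau : p_2 \in D_1(\tau)\}$ and $\{\tau : p_2 \in D_2(\tau)\}$ are closed (the disks vary continuously and are closed sets), and their union is all of $[t',t'']$ because $p_2$ is always covered. Since $[t',t'']$ is connected and each set is nonempty, they must intersect at some $\tau^*$, where $p_2 \in D_1(\tau^*) \cap D_2(\tau^*)$. Writing $y_2 = y_2(\tau^*) \in (-1,1)$, both remaining points $p_1, p_3$ still need covering at $\tau^*$: the disk covering $p_1$ also contains $p_2$ (radius $\geq (1 - y_2)/2$) and the disk covering $p_3$ also contains $p_2$ (radius $\geq (1 + y_2)/2$), or a single disk covers the diametral pair $p_1, p_3$ (radius $\geq 1$); either way the total radius is at least $1$. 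But the optimum at $\tau^*$ is $(1 - |y_2|)/2 \leq 1/2$, and it is strictly positive since $\tau^* \in (t', t'')$, so the ratio at $\tau^*$ is at least $2$, contradicting $\rho < 2$.

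Finally I would lift this to $k > 2$ by adding $k - 2$ extra stationary points placed pairwise far apart and far from the gadget; any bounded-ratio solution must spend one radius-$0$ disk on each, leaving exactly two disks for the three moving points and reducing to the case above. I expect the main obstacle to be this continuity/double-cover step: making precise that a continuous output cannot reroute the coverage of $p_2$ from one disk to the other without, at some instant, having \emph{both} disks cover it, and confirming that this forced instant necessarily has optimum at most $1/2$ (so the expensive intermediate state is not ``paid for'' by a large optimum elsewhere).
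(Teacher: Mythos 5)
Your proof is correct and follows essentially the same route as the paper's: the same collinear three-point gadget (with the roles of moving and stationary points swapped --- the paper fixes the middle point and moves the outer two), the same pair of optimal solutions of cost $1/2$ at the critical time, and the same accounting that handing the middle point from one disk to the other forces a total radius of $1$ against an optimum of at most $1/2$, padded to general $k$ by far-apart stationary points exactly as in the paper. The only substantive difference is that the paper justifies the forced hand-off informally (``we cannot preemptively change to the other solution before $t$\ldots''), whereas your closed-sets-on-a-connected-interval argument pins down the double-coverage instant $\tau^*$ rigorously and verifies it lies strictly inside $(t',t'')$ where the optimum is positive --- a tightening of, not a departure from, the paper's argument.
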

\begin{proof}
	We first construct an instance for $k=2$ that forms the basis of an instance for any $k$. Consider three points on the line $y=0$. One point is at the origin and does not move, while the other two points move at unit speed to the left over $y=0$. Assume that at some time $t$ the points have positions $(-1,0), (0,0), (1,0)$. At any time before and after $t$ there is a single optimal solution. However, at time $t$ there are two optimal solutions; see Fig.~\ref{fig:LB-minsum-eucl-appendix}(a), namely covering one of the outer points and the middle one with a single disk, while covering the remaining point by a radius-$0$ disk.
	
	We want to minimize the sum of the radii of the disks while the points move. If we change from the red to the blue solution at time $t$ in a topologically stable way, we need to cover all three points with a single disk, or grow the smaller disk to also cover the middle point; see Fig.~\ref{fig:LB-minsum-eucl-appendix}(b). In both cases the sum of the radii doubles from $\frac{1}{2}$ to $1$. We cannot preemptively change to the other solution before $t$, since we still need a sum of radii of $1$ during the change, but the optimal solution uses a sum smaller than $\frac{1}{2}$. The case of changing after $t$ is symmetric.
%After $t$ the sum of radii of the optimal solution decreases, while still needing a sum of radii of $1$ to make the change.
	\begin{figure}
		\centering
		\includegraphics{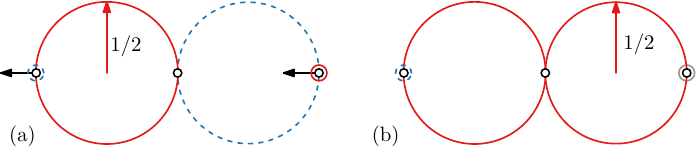}
		\caption{Lower bound construction for the kinetic minsum Euclidean 2-center problem. (a) The optimal solution changes from the solid red solution to the dashed blue solution; the smaller disks have radius zero and thus these solutions have total radius $1/2$. (b) Continuous transformation requires a situation that doubly covers the center point (illustrated) or where one disk covers all three points (not illustrated); either way, the total radius is $1$.}
		\label{fig:LB-minsum-eucl-appendix}
	\end{figure}
	
	The instance for $k=2$ can be extended to hold for any $k$ by having $k-2$ points that are all covered by their own radius-$0$ disk. Placing the remaining points right of $(5,0)$ at least $2$ units away from each other prevents them from being covered by anything but a radius-$0$ disk in an optimal solution.
% between them to ensure the sum of radii is doubled during a morph.
\end{proof}

\section{Unstable and stable algorithms for $k$-center on moving points}
\label{sec:algo}

Topological stability captures the worst-case penalty that arises from making the
transitions in a solution continuous.
In this section we are interested in the corresponding algorithmic problems that generally result in smaller penalties for a specific instance:
how efficiently can we compute the (unstable) $k$-center for an instance with $n$ moving points, and how efficiently can we compute the stable $k$-center?
When we combine these two algorithms, we can determine for any instance how large the penalty is when solving the given instance in a topologically stable way.
We examine these questions for all four $k$-center variants.

The second algorithm gives us a topologically stable solution to a particular instance of $k$-center. This solution can be used in a practical application requiring stability, for example as a stable visualization of $k$ disks covering the moving points at all time. Since we are dealing with topological stability, the solution can sometimes move at arbitrary speeds. However, in many practical cases, we can alter the solution in a way that bounds the speed of the solution and makes the quality of the $k$-center only slightly worse.

\subsection{Unstable $k$-center algorithms}

Let $P$ be a set of $n$ points moving in the plane, each represented by a constant-degree
algebraic function that maps time to the plane. We denote the point set at time $t$ as $P(t)$ and will develop an algorithm that computes the optimal radius/sum or radii $C$ needed to cover all points with $k$ disks at any point in time in the minmax or minsum model, respectively.

Observe that the minimum covering disks of a point set $P(t)$, denoted $\mathcal{B}^*(t)$, is a set of $k$ disks where each disk is defined by three points in $P(t)$, two points as a diametrical pair,
or a singleton point. In other words, we can define $\mathcal{B}^*$ as the Cartesian product of $k$ triples, pairs, and singletons of distinct points from the set $P(t)$. Not every triple is always relevant: if the circumcircle of the three points is not the boundary of the smallest covering disk,
then the triple is irrelevant at that time. Pairs and singletons always define relevant disks.
This formalization allows us to define what we call \emph{candidate} $k$-centers.

\begin{definition}[Candidate $k$-centers]\label{{def:valid}}
	Any set of $k$ disks $D_1,\ldots,D_k$ where each disk is the minimum covering disk of one, two or three points
	in $P(t)$ is called a \textbf{candidate $k$-center} and is denoted $\mathcal{B}(t)$.
	A candidate $k$-center is \textbf{valid} if the union of its disks cover all points of~$P(t)$.
\end{definition}

%\begin{lemma}[Observation]
%	If $\mathcal{B}$ is valid then for each $i$, $(a,b,c)_i$ is the minimum enclosing circle of $P_i$.
%\end{lemma}

This definition allows us to rephrase the goal of the algorithm: For each time $t$ we want to compute the smallest value $C(t)$, such that there exists a valid candidate $k$-center $\mathcal{B}(t)$ where the disks in $\mathcal{B}(t)$ have at most radius $C(t)$ or where their radii sum to $C(t)$ for the \emph{minmax} and \emph{minsum} model respectively.

\subparagraph{Unstable $k$-EC-minmax}
For each singleton, pair, or triple in $P$ we can find the minimum covering disk. Let the radius of this disk be $r$. As the points move along their trajectories, the radius of the minimum covering disk changes over time (unless defined by a singleton point). The function over time giving this radius is continuous for circumcircles because the points that define this radius move continuously. For this reason, the only possibility for discontinuities is when the minimum covering disk of three points changes from being defined by three points to two points or vice versa. However, this change happens only when one of the three points enters/leaves the circumcircle of the two other points. Hence the change between relevant and irrelevant for triples does not introduce discontinuities, as the transition takes place when the circumcircle of the triple coincides with the circumcircle of one pair within the triple.
Each such triple is relevant on $O(1)$ time intervals.
Taking every singleton, pair, and triple of points, we get $O(n^3)$ functions that represent the radii of the minimum covering disks. Any pair of these functions (their images) intersects $O(1)$ times, since the points move along trajectories represented by constant-degree algebraic functions. This implies that the functions form an arrangement of complexity $O(n^6)$. We refer to the images of the functions as curves, for brevity.

Since any pair of curves intersects $O(1)$, in particular the following two functions have this property: consider the function for an arbitrary triple of points $a,b,c$, along with a function for a triple of points that shares exactly two points with $a,b,c$. Whenever the functions intersect, the disks which are represented by the functions have the same radius. Let $p$ be the point in the second triple that differs from the first triple $a,b,c$, then whenever the disks have the same radius, they may coincide and hence all four points considered here may be co-circular. In such a moment, $p$ can enter or exit the circle defined by $a,b$ and $c$. Since this can happen $O(1)$ times, for each point $p$, we can split the function for $a,b,c$ into $O(n)$ pieces where the disk contains the same subset of the input points.

\begin{figure}
    \begin{minipage}{0.6\linewidth}
        \centering
        \includegraphics{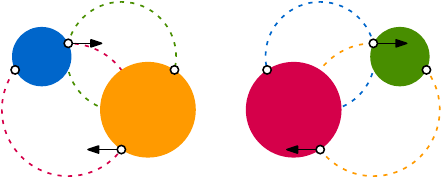}
    \end{minipage}
    \begin{minipage}{0.39\linewidth}
        \centering
        \includegraphics{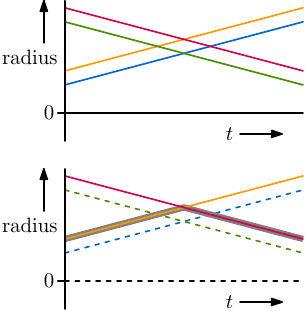}
    \end{minipage}
    \caption{Visualization of the unstable $k$-EC-minmax algorithm (for $k=2$). (Left) A point set is shown for two time steps. The pairs of dashed disks form valid candidate $k$-centers, while the pairs of filled disks form valid candidate $k$-centers with minimal maximum radius. Disks of the same color are defined by the same points. Disks not drawn are either larger than the drawn disks, or require a larger disk to form a valid candidate $k$-center. (Right) The arrangement obtained by plotting the radii of the drawn disks and disks on singleton points. In the bottom arrangement, curves that are not maxcurves are dashed, while the maximum radii of the optimal solution are marked in grey.}
    \label{fig:unstable-algorithm}
\end{figure}

%\begin{usecounterof}{lemma}{lem:curvessamepoints}[Observation]
\begin{observation}
	Each of the $O(n^3)$ curves can be split into $O(n)$ pieces where the same subset of points of $P(t)$ are inside the disk corresponding to the curve.
\end{observation}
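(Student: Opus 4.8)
The plan is to fix a single curve and bound the number of times any point of $P$ can cross the boundary of the associated disk. Consider the curve corresponding to some fixed singleton, pair, or triple of points, and let $D(t)$ be the minimum covering disk it defines, with center $c(t)$ and radius $\rho(t)$. The subset of points of $P(t)$ lying inside $D(t)$ can change only when some point $p_m$ passes through the boundary circle of $D(t)$, that is, when $\|p_m(t) - c(t)\| = \rho(t)$. So it suffices to show that, summed over all points $p_m$, there are only $O(n)$ such boundary-crossing times; the curve is then partitioned by these times into $O(n)$ pieces, on each of which the inside-set is constant.

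The key step is to observe that $c(t)$ and $\rho(t)$ are algebraic functions of $t$ of constant degree. Each defining point moves along a constant-degree algebraic trajectory, and the center and squared radius of the minimum covering disk of one, two, or three points are rational functions of the defining points' coordinates: for a singleton the radius is $0$, for a pair the center is the midpoint and the radius half the distance, and for a triple (on the intervals where the circumcircle is the covering disk) the circumcenter and squared radius are rational expressions in the three moving points. Substituting the trajectories, the condition $\|p_m(t) - c(t)\|^2 = \rho(t)^2$ becomes, after clearing denominators, a polynomial equation in $t$ of constant degree. Such a polynomial has $O(1)$ real roots unless it is identically zero, so each point $p_m$ enters or leaves $D(t)$ only $O(1)$ times. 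Summing over the $O(n)$ points gives $O(n)$ boundary-crossing events in total, and hence $O(n)$ pieces.

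The main thing to be careful about is the degenerate cases that could break the constant-degree argument. First, clearing denominators is essential: the circumcenter of a triple has a denominator that vanishes when the three points become collinear, so I would multiply through by it to keep a genuine polynomial of bounded degree and to count only finitely many roots. Second, the polynomial could be identically zero, meaning $p_m$ stays on the boundary circle throughout; such a point never changes its inside/outside status and therefore contributes no new pieces, so it can be set aside. Finally, I would invoke the already-established fact that a triple's minimum covering disk coincides with its circumcircle only on $O(1)$ relevant time intervals; on each such interval the algebraic description above applies, so the per-curve bound of $O(n)$ pieces is maintained.
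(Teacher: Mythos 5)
Your argument is correct: the paper states this observation without proof, and your reasoning --- that the center and radius of the covering disk are constant-degree algebraic functions of $t$, so each of the $n$ points of $P$ crosses the disk boundary only $O(1)$ times, yielding $O(n)$ pieces per curve --- is exactly the standard event-counting argument the authors implicitly rely on. Your treatment of the degeneracies (clearing the collinearity denominator for triples, discarding identically-zero crossing polynomials, and restricting each triple to its $O(1)$ relevant intervals) is also appropriate and consistent with the paper's setup.
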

%\end{usecounterof}

We are not interested in the arrangement as a whole, but only in the parts where the curves show the maximum radius of a \textbf{valid} $k$-center:  we want to know when the minimum covering disk is the largest of the $k$ disks that cover all the points.
The curve of such a pair or triple may define a part of the solution to the minmax radius problem. For ease of description we will now first continue with the algorithm for the 2-center case, and then show how to extend it to larger values of $k$.

Assume that a pair $a,b$ or triple $a,b,c\in P$ has a minimum covering disk $D_1$ with radius $r_1$ at time $t$.
Let $P_1 \subseteq P$ be all the points covered by $D_1$. To solve the 2-center problem, we need to cover all other points with another disk. Let the minimum covering disk of $P_2=P \setminus P_1$ be $D_2$ with radius $r_2$.
We say that the curve for pair $a,b$ or triple $a,b,c$ is a \emph{maxcurve} at time $t$ if $r_1 \geq r_2$
at time $t$.

A curve can only become a maxcurve at intersections of the curves, since the radii of two covering disks will be equal at an intersection. We refer to a piece of curve between intersections as an \emph{arc}. If we take the arrangement of all maxcurves, we still get an arrangement of complexity $O(n^6)$. It takes $O(n^7)$ time to compute this arrangement, since we need to check if an arc is a maxcurve only at a single time for every arc in the arrangement. As we take all maxcurve arcs, we know that we keep only the parts of the initial arrangement where the maximum radius of a solution is represented. The lower envelope of this arrangement will therefore show the maximum radius of an optimum solution at any point in time for this instance of the Euclidean 2-center problem. See Figure~\ref{fig:unstable-algorithm} for a visualization of an input instance and the corresponding arrangements.

Finding the lower envelope of this arrangement takes $O(\lambda_{s+1}(n^6)\log(n))$ time when every pair of curves intersects at most $s$ times with each other~\cite{hershberger89}, where $\lambda_s(n)$ is the maximum length of a Davenport–Schinzel sequence of order $s$ with $n$ distinct values~\cite{Sharir1995davenport}.
The time needed to compute the lower envelope is dominated by the $O(n^7)$ time we spend on
computing the arrangement of maxcurves itself.
%Alternatively, we can use the kinetic minimum enclosing circle algorithm~\cite{demain2010kinetic}
%for the $O(n^4)$ arcs of the arrangement where the defining circle encloses the same subset
%of points. This leads to an $O(n^{7+\eps})$ time algorithm, for an arbitrarily small
%constant $\eps>0$.

To extend this algorithm to the Euclidean $k$-center, we observe that we can start with the
same set of $O(n^3)$ curves for all singletons, pairs, and triples of moving points.
We define a curve to be \emph{maxcurve}
if the not yet covered points can be covered by at most $k-1$ disks of no larger radius.
For each of the $O(n^6)$ arcs of the arrangement this implies solving a static $(k-1)$-center
problem, which takes $O(n^{2k-1})$~\cite{drezner1984p} or $n^{O(\sqrt{k})}$ time~\cite{hwang93}.

\begin{theorem}
	Given a set of $n$ points whose positions in the plane are determined by constant-degree algebraic functions, the $2$-EC-minmax problem can be solved by an algorithm that runs
	in $O(n^7)$ time. The $k$-EC-minmax problem can be solved in $O(n^{2k+5})$
	or $n^{O(\sqrt{k})}$ time.
\end{theorem}

\subparagraph{Unstable $k$-EC-minsum}
We continue with the minsum version of the Euclidean $k$-center problem.
In this variant we can no longer use maxcurves which define the important radii.
Instead, choose $k$ curves and their corresponding $k$-center, and trace it over time
to determine when the $k$-center covers all points. The number of times when a point
enters or leaves any of the $k$ disks of the $k$-center is $O(kn)=O(n)$. Hence, any
choice of $k$ curves gives $O(n)$ time intervals where the $k$-center is valid.
We sum the $k$ curves (radii) on these intervals to get candidate $k$-center values
and new, summed curves.
In total, there are $O(n^{3k+1})$ new curves that are the sum of $k$ original curves,
and their lower envelope represents the desired function $R(t)$.
%we will take the $O(n^4)$ curves implied by Lemma~\ref{lem:curvessamepoints}, and use a kinetic
%data structure for minimum covering disks of the points of $P$ not covered by the disk that
%defines the curve. Since the points outside the disk that defines the curve is a fixed subset for each of the $O(n^4)$
%curves, the use of a kinetic data structure is standard. For any constant $\eps>0$,
%the kinetic minimum covering disk problem can be solved in $O(n^{3+\eps})$ time,
%leading to an overall solution of $O(n^{7+\eps})$ time after computing a lower envelope
%of $O(n^{7+\eps})$ new curves that represent the sum of two curves.

%To extend this result to minsum Euclidean $k$-center, we can use the brute-force approach
%of choosing every subset of $k-1$ of the $O(n^3)$ curves. By Lemma~\ref{lem:curvessamepoints},
%these curves are jointly split into $O(kn)=O(n)$ pieces where the same subset of points of $P$
%are outside all $k-1$ disks. Now we get $O(n\cdot n^{3(k-1)}\cdot n^{3+\eps})=O(n^{3k+1+\eps})$
%new curves that are the sum of $k$ curves.

\begin{theorem}
	Given a set of $n$ points whose positions in the plane are determined by constant-degree algebraic functions, the $k$-EC-minsum problem can be solved by an
	algorithm that runs in $O(\lambda_{c}(n^{3k+1})\log n)$ time for some constant $c$, where
	$\lambda_c(n)$ is the maximum length of a Davenport–Schinzel sequence of order $c$ with $n$ distinct values.
\end{theorem}

%\subsection{Unstable rectilinear $k$-center algorithms}

\subparagraph{Unstable $k$-RC-minmax}
The rectilinear version of the $k$-center problem in the minmax model is solved by similar methods
as the Euclidean version, albeit simpler and more efficient. We use pairs of points
to define curves that define the radius over time by letting the points be on opposite sides of
a smallest covering square.
This square is not unique, as the points on opposite sides block movement only in a single direction. Thus, there are $O(n)$ different subsets of points that can be covered by such a
square. We first consider the $2$-center case,
where it is sufficient to take the two extreme squares: two squares that cover points with extremal coordinates, for example one square covers the topmost point while the other covers the bottommost point. Extreme squares are sufficient, because in the case where the second square must cover points that lie beyond two opposite sides of the first square, then that second square must be larger. Therefore, in comparison the solution using two extreme squares has a smaller or equal maximum radius.

In total we have $O(n^2)$ curves, defined by pairs of points on opposite sides of a square, which form an arrangement of complexity $O(n^4)$. We again use the concept of maxcurves: we are interested in those arcs of the arrangement
for which the not yet covered points can be covered by a square of no larger size.
While we can test this easily in
linear time for each of the $O(n^4)$ arcs, we can use the arrangement to do this faster. For each curve $C$ corresponding to a square $S$, we process the points moving
in and out of $S$ and maintain the leftmost, rightmost, bottommost and topmost uncovered points, for example using heaps. As the points move, each square will start and stop covering other points $O(n)$ times. The uncovered points can swap places $O(n^2)$ times in an ordering of their $x$- and $y$-coordinates. Thus we maintain the left-, right-, top-, and bottommost points in $O(n^2 \log n)$ for each curve $C$, and determine in constant time whether each of the $O(n^2)$ arcs along the curve is a maxcurve.
Finally, we compute the lower envelop of the maxcurves in time linear in the number of maxcurvce, since the maxcurves are disjoint arcs.

For $k$-centers, the observation that one can use the two extreme squares
for a pair of points no longer holds.
We therefore define squares by triples as in the Euclidean case, the main difference being that the $O(n^3)$
curves we get have considerable overlap: there must still be two defining points on opposite sides of the covering squares defined by triples, hence all $O(n)$ squares that have these exact points on opposite sides must have the same area. The arrangement of the $O(n^3)$ curves is therefore equivalent to the arrangement for the $O(n^2)$ curves defined above, and each arch of the arrangement can be part of $O(n)$ curves. Since the $O(n^2)$ curves form an arrangement of complexity $O(n^4)$, we need to test $O(n^5)$ arcs for being maxcurves. Following the analysis as in the Euclidean case, we obtain:

\begin{theorem}
	Given a set of $n$ points whose positions in the plane are determined by constant-degree algebraic functions, the $2$-RC-minmax problem can be solved by an algorithm that runs
	in $O(n^4\log n)$ time. The $k$-RC-minmax problem can be solved in $O(n^{2k+4})$
	or $n^{O(\sqrt{k})}$ time.
\end{theorem}

\subparagraph{Unstable $k$-RC-minsum}
We can use exactly the same approach as in the Euclidean case, and obtain:

\begin{theorem}
	Given a set of $n$ points whose positions in the plane are determined by constant-degree algebraic functions, the $k$-RC-minsum problem can be solved by an
	algorithm that runs in $O(\lambda_{c}(n^{3k+1})\log n)$ time for some constant $c$, where
	$\lambda_c(n)$ is the maximum length of a Davenport–Schinzel sequence of order $c$ with $n$ distinct values.
\end{theorem}

\subsection{Topologically stable $k$-center algorithms}

In this section we describe an algorithm to compute a topologically stable solution for the Euclidean $k$-center variants.
%While we describe and analyze the algorithm using disks, 
We use only combinatorial properties of solutions, which also hold for the rectilinear $k$-center variants. Hence the same algorithm, replacing disks by squares, also solves the rectilinear variants.

Intuitively, the unstable algorithm finds the lower envelope of all the \emph{valid} radii by traversing the arrangement of all valid radii over time. At each time $t$ a minimal enclosing disk $D_1$ (defined by a set of at most three points) in the set of optimal disks $\mathcal{B}^*(t)$ needs to be replaced with another disk $D_2$, we ``hop'' from our previous curve to the curve corresponding to the new disk $D_2$. If we require that the algorithm is topologically stable these hops have a cost associated with them.
%;
%recall that we are interested only in the size of the disks needed for a continuous motion and not the speed of this motion.

We first show how to model and compute the cost $C(t)$ of a topological transition between any two $k$-centers at a fixed time $t$. We then extend this approach to work over time. Let $t$ be a fixed moment in time where we want to go from one $k$-center $\mathcal{B}_1 $ to another candidate $k$-center $\mathcal{B}_2$. The transition can happen at infinite speed but must be continuous. We denote the infinitesimal time frame around $t$ in which we do the transition as
$[0,T]$.
%\footnote{Another way of defining the time frame is not a fixed time $t$ (something which is actually not possible) but rather an interval $t+\eps$, then the interval $[0,T]$ becomes $[t, t+\eps]$}
We extend the concept of a $k$-center with a corresponding partition of the input points at time $t$ $P(t)$ over the disks in the $k$-center:

\begin{definition}[Disk set]
For each disk $D_i$ of a candidate $k$-center $\mathcal{B}$ for $P(t)$ we define its \textbf{disk set} $P_i\subseteq P(t)\cap D_i$ as the subset of points assigned to $D_i$.
A candidate $k$-center $\mathcal{B}$ with disk sets $P_1,\ldots,P_k$ is \textbf{valid} if the disk sets partition $P(t)$.
We say $\mathcal{B}$ is \textbf{valid} if there exist disk sets $P_1,\ldots,P_k$ such that
$\mathcal{B}$ with disk sets $P_1,\ldots,P_k$ is valid.
\end{definition}

During a continuous transformation, the disk sets of a valid candidate $k$-center will change in the time interval $[0,T]$ while the points $P(t)$ do
not move. In essence the time $t$ is equivalent to the whole interval $[0,T]$. For ease of understanding,
we use $t'$ to denote any time in the interval $[0,T]$.
Observe that our definition of topological stability leads to an intuitive way of recognizing a stable transition:

\begin{lemma}
	\label{lemma:equivalence}
	A transition from one candidate $k$-center $\mathcal{B}_1(t)$ to another candidate $k$-center $\mathcal{B}_2(t)$ in the time interval $[0,T]$ is \textbf{topologically stable} if and only if the change of the disks' centers and radii is continuous over $[0,T]$ and at each time $t' \in [0,T]$, $\mathcal{B}(t')$ is \textbf{valid}.
\end{lemma}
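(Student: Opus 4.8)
The plan is to prove the biconditional by directly unfolding the definition of a topologically stable algorithm together with the two notions of validity introduced above; the statement is essentially a dictionary that translates the global, algorithm-level definition of topological stability into a local condition on a single transition, so most of the work is conceptual rather than computational.

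First I would handle the forward direction. Assume the transition from $\mathcal{B}_1(t)$ to $\mathcal{B}_2(t)$ is topologically stable. By definition, a topologically stable algorithm maps a path in $\mathcal{T}_\mathcal{I}$ to a path in the solution topology $\mathcal{T}^k_\mathcal{S}$, and the latter topology consists precisely of configurations of $k$ disks with continuously moving centers and radii. Restricting attention to the interval $[0,T]$, this yields exactly that the centers and radii of the $k$ disks vary continuously over $[0,T]$. Moreover, topological stability is only considered for algorithms that give valid solutions, so the output must cover all points of $P(t)$ at every $t' \in [0,T]$. Since a candidate $k$-center is valid exactly when its disks together cover $P(t)$---in which case disk sets $P_1,\ldots,P_k$ partitioning $P(t)$ can always be chosen consistently with the coverage---the coverage requirement is equivalent to validity of $\mathcal{B}(t')$ for each $t'$.

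For the reverse direction, assume that the centers and radii change continuously over $[0,T]$ and that $\mathcal{B}(t')$ is valid for every $t' \in [0,T]$. Continuity of the centers and radii is, by definition of $\mathcal{T}^k_\mathcal{S}$, exactly the statement that the transition traces a path in the solution topology. Validity of each $\mathcal{B}(t')$ means its disks cover all of $P(t)$, so the solution is valid throughout. These are precisely the two requirements in the definition of a topologically stable algorithm specialized to this transition, so the transition is topologically stable.

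The only genuine subtlety, and the step I expect to require the most care, is reconciling the global definition of topological stability---phrased for an algorithm mapping an input path of moving points to an output path---with the local setting of a single transition at a fixed time $t$ where the points are stationary. The key observation is that, since the points do not move during $[0,T]$, the relevant input path is constant, so the stability of the algorithm on this segment reduces entirely to the continuity and per-time validity of its output; the artificial reparametrization by $[0,T]$ affects neither continuity nor validity. A minor accompanying point, which I would state explicitly, is that the two notions of validity in play coincide: a candidate $k$-center admits disk sets partitioning $P(t)$ if and only if its disks jointly cover $P(t)$, which is exactly the classical notion of a valid solution in $\mathcal{S}^k$.
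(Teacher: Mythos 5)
Your proposal is correct and follows essentially the same route as the paper, which disposes of this lemma in a single sentence by observing that both conditions are direct restatements of the definitions of the solution topology and of validity. Your more explicit two-direction unfolding, including the remark that coverage is equivalent to the existence of disk sets partitioning $P(t)$, is just a fuller version of that same definitional argument.
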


\begin{proof}
	Note that by definition the disks must be transformed continuously and that all the points in $P(t)$ are covered in $[0,T]$ precisely when a valid candidate $k$-center exists.
\end{proof}

Now that we can recognize a topologically stable transition, we can reason about what such a transition looks like:

\begin{lemma}
	\label{lemma:interleave}
	Any topologically stable transition from one $k$-center $\mathcal{B}_1 (t)$ to another $k$-center $\mathcal{B}_2 (t)$ in the timespan $[0, T]$ that minimizes $C(t)$ (the largest occurring \emph{minsum/minmax} over $[0,T]$) can be obtained by a sequence of events where in each event, occurring at a time $t'\in [0,T]$, a disk $D_i \in \mathcal{B}(t')$ adds a point to $P_i$ and a disk $D_j \in \mathcal{B}(t')$ removes a point from $P_j$. We call this \textbf{transferring}.
\end{lemma}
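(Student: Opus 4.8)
The plan is to start from an arbitrary cost-minimizing topologically stable transition and show that every combinatorial change in the disk sets can be split into single-point transfers without ever raising the cost; the resulting transition is then still optimal and has exactly the claimed form. Two facts drive the argument. (i) The minmax/minsum value of a configuration depends only on the radii of the $k$ disks and not on how $P(t)$ is partitioned among them, so re-assigning points while holding the disks fixed does not change the cost. (ii) By continuity of the disks together with validity, any point that switches its disk assignment at a time $\tau$ must lie in the (closed) overlap of its old and its new disk at time $\tau$. Throughout I write $C(t)$ for the largest value occurring over $[0,T]$, as in the statement.

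First I would make the combinatorial changes discrete. By Lemma~\ref{lemma:equivalence}, the transition is a continuous motion of the $k$ disks that is valid at every $t'\in[0,T]$, and I may choose the disk sets \emph{lazily}: keep each point in its current disk as long as that disk still covers it, and re-assign it only when forced. Since $P(t)$ is finite the partition $P_1(t'),\dots,P_k(t')$ takes finitely many values, so with the lazy rule it is piecewise constant and changes only at a discrete set of event times $\tau_1<\tau_2<\cdots$. At such a $\tau$ the partition changes, and because validity forces every point to stay covered and assigned, each change is a collection of one or more single-point transfers (a point leaving $D_j$ must simultaneously enter some $D_i$ that covers it). To establish fact (ii), note that if $p\in P_j(t')$ just below $\tau$ and $p\in P_i(t')$ just above $\tau$, then validity gives $p\in D_j(t')$ before and $p\in D_i(t')$ after; letting $t'\to\tau$ and using continuity of centers and radii yields $p\in\overline{D_i(\tau)}\cap\overline{D_j(\tau)}$. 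Thus at the frozen configuration $\mathcal{B}(\tau)$ every point participating in the change lies in the overlap of its source and destination disk.

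Next I would serialize. At each event time $\tau$ I freeze the disks at their positions in $\mathcal{B}(\tau)$ and re-assign the participating points one at a time, each move taking a single point $p$ from its source disk set $P_j$ to its destination disk set $P_i$. By fact (ii) this $p$ is contained in both disks, so after each single transfer the disk sets still partition $P(t)$ and, by Lemma~\ref{lemma:equivalence}, the configuration is valid; this holds in any order in which the points are serialized, and even for cyclic patterns of simultaneous transfers, since the frozen overlaps do not depend on which transfers have already occurred. Because the disks do not move during these re-assignments, fact (i) guarantees the instantaneous minmax/minsum value stays equal to that of $\mathcal{B}(\tau)$, which is at most $C(t)$. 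Replacing every (possibly simultaneous) change at each $\tau$ by this sequence of single-point transfers—spread over an infinitesimal sub-interval around $\tau$ so that the events occur at distinct times with the disks held fixed in between—produces a transition that is still continuous, valid everywhere, of the same maximum cost $C(t)$, and composed entirely of transfer events. Since $C(t)$ is already minimal, the new transition is still optimal.

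The step I expect to be the main obstacle is guaranteeing that the intermediate partitions created during serialization are valid: this is precisely where fact (ii) is needed, and it is the reason a transfer of $p$ is possible only while $p$ sits in the overlap of two disks. Secondary care is needed to argue that the combinatorial changes can be taken discrete (handled by the lazy assignment rule) and that several transfers occurring at the same instant can be ordered and carried out without moving the disks (handled by cost depending only on the radii, so that re-assignment is free).
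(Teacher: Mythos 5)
Your proposal is correct and follows essentially the same route as the paper's proof: both discretize the simultaneous continuous movement into a sequence of single-point transfer events, relying on the facts that the cost depends only on the disks' radii (so reassignment with frozen disks is free) and that a point changing assignment must lie in the overlap of its old and new disk at that instant. You are somewhat more explicit about serializing simultaneous (even cyclic) reassignments at a frozen configuration, whereas the paper recursively peels off the first transfer event and observes that the other disks need not move until their own events; the substance is the same.
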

\begin{proof}
	The proof is by construction. Assume that we have a transition from  $\mathcal{B}_1(t)$ to  $\mathcal{B}_2(t)$ and that the transition that minimizes either the sum or the maximum of all radii contains simultaneous continuous movement. Let this transition take place in $[0,T]$.
	
	To determine $C(t)$ we need to look only at times $t' \in [0,T]$ where a disk $D_i \in \mathcal{B}$ adds a new point $p$ to its disk set $P_i$ and another disk $D_j$ removes it from $P_j$. Only at $t'$ must both disks contain $p$; before $t'$ disk $D_j$ may be smaller and after $t'$ disk $D_i$ may be smaller.

	We claim that any optimal simultaneous continuous movement of cost $C(t)$ can be discretized into a sequence of events with cost no larger than $C(t)$. We do so recursively: for the continuous movement there exists a $t_0 \in [0,T]$ as the first time a disk $D_i \in \mathcal{B}$ adds a point $p$ to $P_i$. Then at $t_0$, $D_i$ has to contain both $P_i$ and $p$ and must have a certain size $d$. All the other disks $D_j \in \mathcal{B}$ with $j \neq i$ only have to contain the points in $P_j$ so they have optimal size if they have not moved from time $0$. In other words, it is optimal to first let $D_i$ obtain $p$ in an event and to then continue the transition from $[t_0,T]$. Applying this argument recursively allows us discretize the simultaneous movement into sequential events.
\end{proof}

\begin{corollary}
	\label{corollary:sequence}
	Any topologically stable transition from one $k$-center $\mathcal{B}_1 (t)$ to another $k$-center $\mathcal{B}_2(t)$  in the timespan $[0, T]$ that minimizes $C(t)$ (the largest occurring \emph{minsum/minmax} over $[0,T]$) can be obtained by a sequence of events where in each event the following happens:
	\begin{itemize}
		\item A disk $D_i \in \mathcal{B}_1(t)$ that was defined by one, two or three points in $P(t)$ is now defined by a new set of points in $P(t)$ where the two sets differ in only one element.
%		\item A disk $D_i \in \mathcal{B}_1(t)$ that was defined by one or two points in $P(t)$ is defined by one more point in $P(t)$.
%		\item A disk $D_i \in \mathcal{B}_1(t)$ that was defined by two or three points in $P(t)$ is defined by one fewer points in $P(t)$.
	\end{itemize}
	With every event, $P_i$ must be updated with a corresponding insert and/or delete. We call these events a \textbf{swap} because we intuitively swap one of the defining elements.
\end{corollary}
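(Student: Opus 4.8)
The plan is to derive the corollary directly from Lemma~\ref{lemma:interleave} by translating the \emph{transfer} events, which are phrased in terms of the disk sets $P_i$, into the combinatorial language of the defining points of each disk. Lemma~\ref{lemma:interleave} already guarantees that the cost-minimizing transition is realized by a discrete sequence of transfers, in each of which a single point $p$ leaves one disk set $P_j$ and enters another disk set $P_i$. So I would fix one such transfer and argue, on the side of $D_i$ and on the side of $D_j$ separately, that it can be expressed as a sequence of swaps of the required form, after which I concatenate the swaps of all transfers.

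First I would observe that in a cost-minimizing transition each disk may be assumed to equal the minimum enclosing disk of its current disk set at every instant: enlarging a disk beyond $\mathrm{MinEnc}(P_i)$ never helps the minmax or minsum objective, and by Definition~\ref{{def:valid}} a candidate disk is determined by at most three of its points. Hence throughout the transfer the combinatorial state of $D_i$ is captured by a defining set $S_i \subseteq P_i$ with $|S_i| \le 3$. When the point $p$ is inserted into $P_i$, either $p \in D_i$, in which case only $P_i$ is updated and $S_i$ is unchanged, or $p \notin D_i$, in which case $D_i$ must grow to $\mathrm{MinEnc}(P_i \cup \{p\})$; the symmetric situation arises for $D_j$, which shrinks to $\mathrm{MinEnc}(P_j \setminus \{p\})$.

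The core step is to show that each such growth (or shrink) can be carried out so that the defining set changes one element at a time. I would track the minimum enclosing disk along the monotone growth from $\mathrm{MinEnc}(P_i)$ to $\mathrm{MinEnc}(P_i \cup \{p\})$ and use the standard fact that the defining set of a minimum enclosing disk changes only at discrete radii, where a single point joins or leaves the boundary; each such change replaces, inserts, or deletes exactly one defining element, which is precisely a swap, accompanied by the insert of $p$ into $P_i$. Because the disk grows monotonically and its radius never exceeds that of the endpoint $\mathrm{MinEnc}(P_i \cup \{p\})$ of the transfer, no intermediate swap raises the largest radius (or the running sum) beyond what the transfer already requires, so $C(t)$ is preserved, and the intermediate configurations remain valid in the sense of Lemma~\ref{lemma:equivalence}.

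The main obstacle I anticipate is precisely this one-element-at-a-time claim: establishing that as the minimum enclosing disk interpolates monotonically between two point sets differing by a single point, its defining set passes through a sequence of states each differing from the previous by a single defining element, while validity is maintained and $C(t)$ is not increased. Once this is in place, realizing every transfer of Lemma~\ref{lemma:interleave} through its induced swaps yields the claimed swap sequence for the entire transition, which is what the corollary asserts.
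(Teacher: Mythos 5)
The paper states this corollary without proof, treating it as an immediate reformulation of Lemma~\ref{lemma:interleave} in terms of the disks' defining points, and your derivation follows exactly that route: take the discrete transfer events guaranteed by the lemma and refine each into single-element changes of the defining sets. The one technical point you flag as the main obstacle---that as a point is transferred the minimum enclosing disk's defining set can be made to change one element at a time without raising the cost or breaking validity---is indeed the crux that the paper leaves implicit, and your continuity argument (defining sets change only at discrete, generically simple boundary events along a monotone deformation) is the standard and adequate way to close it.
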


%\begin{proof}
%	Lemma \ref{lemma:interleave} states that any optimal stable transition contains a sequence of transfers.
	%After a transfer between two disks $D_i$ and $D_j$ with the assigned points $P_i$ and $P_j$ respectively can create at most two new minimal enclosing disks $D_i'$ and $D_j'$. Since only one point is transferred, the two sets of each at most three elements that define the enclosing disks can alter only one element. The result is that for each optimal stable transition, we can first find a sequence of \textbf{transfers} that discretizes the transition.
%	Each transfer consists of at least zero and  at most two sequential swap events.
	%\footnote{All the cost-free transfers between $P_i$ and $P_j$ are not modeled. We only look at events where we have to change the sets of points that determine the minimum enclosing disks.}
%\end{proof}

%\subsubsection{Determining the cost of a single stable transition at time $t$}
\subparagraph{The cost of a single stable transition}
Corollary \ref{corollary:sequence} allows us to model a stable transition as a sequence of swaps but how do we find the optimal sequence of swaps? A single minimal covering disk at time $t$ is defined by at most three unique elements from $P(t)$ so there are at most ${O}(n^3)$ subsets of $P(t)$ that could define one disk of a $k$-center. Let these ${O}(n^3)$ sets be the vertices in a graph $G$. We create an edge between two vertices $v_i$ and $v_j$ if we can transition from one disk to the other with a single swap and that transition is topologically stable.
Each vertex is incident to only a constant number of edges (apart from degenerate cases) because during a swap the disk $D_i$ corresponding to $v_i$ can only add one element to $P_i$.
%\marc{Ivor, the set $P_i$ is not the set that defines $D_i$, but a larger set. So $D_i$ may contain contain many more points due to transfers that were not swaps. IVOR: that is correct. But the idea of this sentence is that during a swap, each $D_i$ changes one element in its defining set and therefore also changes only one element in $P_i$. If that is unclear then we might want to say: "The new set of points in $P(t)$ that defines $D_i$ only is changed in one element."}
Moreover, the radius of the disk is maximal on vertices in $G$ and not on edges. The graph $G$ has ${O}(n^3)$ complexity and takes ${O}(n^4)$ time to construct, as we need to check for each vertex which edges should be added. This can be done by checking all $O(n)$ vertices for which the disk set differs by one element.

\begin{figure}[t]
	\centering
	\includegraphics{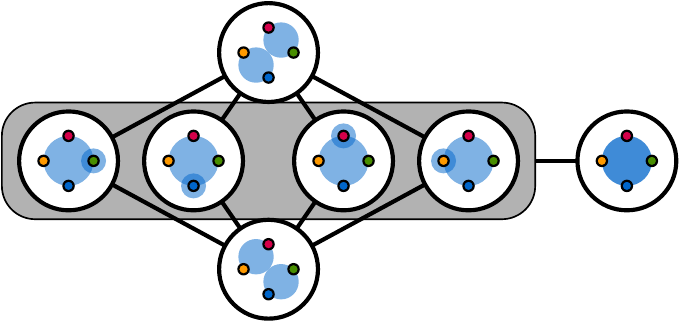}
	\caption{The $G^2$ graph on four points. Vertices of $G^2$ in the grey area all have an edge to the vertex on the right. This rightmost vertex has a disk defined by the red and blue points and another disk defined by the yellow and green points.}
	\label{fig:transition-graph}
\end{figure}

This graph provides a framework to trace the radius of the transition from a single disk to another disk. However, we want to transition from one $k$-center to another. We use the previous graph to construct a new graph $G^k$ where each vertex $w_i$ represents a set of $k$ disks: a candidate $k$-center $\mathcal{B}_i$. We again create an edge between vertices $w_i$ and $w_j$ if we can go from the candidate $k$-center $\mathcal{B}_i$ to $\mathcal{B}_j$ in a single swap. With a similar argument as above, each vertex is only connected to ${O}(k)$ edges. The graph thus has ${O}(n^{3k})$ complexity and can be constructed in ${O}(n^{3k+1})$ time. This time, each of the $O(n^{3k})$ vertices has to check $O(kn)=O(n)$ other vertices to determine whether there should be an edge: for each of the $k$ disks there are $O(n)$ vertices for which the disk set differs by one element. Each vertex $w_i$ gets assigned the cost (minmax/minsum) of the $k$-center $\mathcal{B}_i$ where the cost is $\infty$ if $\mathcal{B}_i$ is invalid. Figure~\ref{fig:transition-graph} shows an example of a $G^2$ graph.

Any connected path in this graph from $w_i$ to $w_j$ without vertices with cost $\infty$ represents a stable transition from $w_i$ to $w_j$ by Corollary \ref{corollary:sequence}, where the cost of the path (transition) is the maximum value of the vertices on the path.
We can now find the optimal sequence of swaps to transition from any vertex $w_i$ to $w_j$ by finding the cheapest path in this graph in  ${O}(n^{3k}\log n)$ time, for examply using an adapted version of Dijkstra's algorithm to maintain the highest cost of the cheapest path. The running time is dominated by the ${O}(n^{3k+1})$ time it takes to construct the graph.

%\subsubsection{Maintaining the cost of a flip over the whole movement}
\subparagraph{Maintaining the cost of a flip}
For a single point in time we can now determine the cost of a topologically stable transition from a $k$-center $\mathcal{B}_i$ to $\mathcal{B}_j$ in ${O}(n^{3k+1})$ time.  If we want to maintain the cost $C(t)$ for all times $t$, the costs of the vertices in the graph change over time.
If we plot the changes of these costs over time, the graph consists of monotonously increasing or decreasing segments, separated by moments in time where two radii of disks are equal.
A straightforward counting results in ${O}(n^{3k})$ of such events, dominated by splitting the cost function of each vertex in monotone segments. These events also contain all events where the structure of our graph $G^k$ changes \emph{and} all the moments where a vertex in our graph becomes \emph{invalid} and thus gets cost $\infty$, which happens when four points become co-circular. The result of these observations is that we have a ${O}(n^{3k})$ size graph, with ${O}(n^{3k})$ relevant changes, where with each change we spend $O(kn) = {O}(n)$ time per vertex to restore the graph by recomputing the edge set. We can then compute an optimal solution as described in the previous paragraph, at each event update $G^k$ and use the updated graph to find the cost of a continuous transition, if the event coincides with a discrete change in the optimal solution. This leads to an algorithm which can determine the cost of a topologically stable movement in ${O}(n^{6k+1})$ time.

\begin{theorem}
Given a set of $n$ points whose positions in the plane are determined by constant-degree algebraic functions, the cost of a topologically stable solution to the $k$-EC-minmax/-minsum problem can be found by an algorithm that runs in $O(n^{6k+1})$ time.
\end{theorem}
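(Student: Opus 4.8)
The plan is to combine the fixed-time machinery developed above with a sweep over time. At a fixed time $t$, Corollary~\ref{corollary:sequence} reduces any optimal stable transition between two candidate $k$-centers to a sequence of swaps, so I would first argue that the cost of the cheapest such transition is exactly the bottleneck (minimax) cost of the cheapest path in the graph $G^k$: its vertices are candidate $k$-centers (each disk defined by at most three points of $P(t)$), its edges join $k$-centers that differ by a single swap, and each vertex carries its minmax or minsum cost, set to $\infty$ when the $k$-center is invalid. Since $G^k$ has $O(n^{3k})$ vertices and $O(k)$ edges per vertex, it is built in $O(n^{3k+1})$ time and a cheapest path is found in $O(n^{3k}\log n)$ time; hence the cost of a single stable transition at a fixed time is computable in $O(n^{3k+1})$ time.

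Next I would extend this to all times simultaneously. As the points move along their constant-degree algebraic trajectories, each vertex cost is the radius of a disk through at most three moving points, so it is a continuous, piecewise-monotone function with $O(1)$ breakpoints. The quantities that can change the answer are (i) two vertex costs becoming equal, (ii) a point entering or leaving a disk, so that a vertex switches to or from cost $\infty$ and the validity changes, and (iii) a change in the swap-adjacency structure of $G^k$. Each of these families contributes $O(n^{3k})$ events over the whole time domain, and between two consecutive events the combinatorial structure of $G^k$ together with the relative order of the costs is fixed, so the identity of the cheapest minimax path is invariant and $C(t)$ varies monotonically on each such interval.

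Finally I would process the $O(n^{3k})$ events in chronological order. At each event I restore the affected part of $G^k$ in $O(n)$ time; recomputing the cheapest minimax path then costs $O(n^{3k}\log n)$, which is subsumed by charging the $O(n^{3k+1})$ fixed-time resolution cost to each event. Summed over the $O(n^{3k})$ events this gives $O(n^{3k}) \cdot O(n^{3k+1}) = O(n^{6k+1})$ time, dominating the preprocessing. The argument is oblivious to the objective, invoking the vertex-cost assignment only as a black box, so it applies verbatim to both the minmax and the minsum Euclidean models and yields the claimed $O(n^{6k+1})$ bound.

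The main obstacle I anticipate is the bookkeeping in the second step: precisely defining the events, proving that each of the three families contributes only $O(n^{3k})$ of them, and establishing that the cheapest minimax path---and therefore $C(t)$---is genuinely combinatorially invariant between consecutive events. Pinning down the degree of the algebraic functions (hence the $O(1)$ breakpoints per cost curve) and confirming that no further events are needed to keep the recomputation correct is where the care lies; the per-event recomputation and the final summation are then routine.
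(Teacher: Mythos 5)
Your proposal follows essentially the same route as the paper: the same swap graph $G^k$ with $O(n^{3k})$ vertices of degree $O(k)$, the same bottleneck-path characterization of the transition cost at a fixed time, and the same kinetic sweep over $O(n^{3k})$ events (radii coinciding, validity changes, adjacency changes) with $O(n)$ restoration per event, yielding $O(n^{6k+1})$. Your explicit enumeration of the three event families and the invariance of the cheapest path between consecutive events is, if anything, slightly more careful than the paper's own (rather terse) accounting.
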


If we run the unstable and stable algorithms on the moving points, we obtain two functions that map time to a cost. The ratio of these two costs over time corresponds to the approximation ratio of the computed topologically stable solution at any point in time. The maximum of this ratio over time is the topological stability ratio of this solution, which is therefore obtained in $O(n^{6k+1})$ time as well. Since the stable algorithm is constructive, we can also find a topologically stable solution in the same time, by returning the set of disks corresponding to the minimum cost path through the time-varying graph $G^k$.

%\begin{corollary}
%Given a set of $n$ points whose positions in the plane are determined by constant-degree algebraic functions, the topological stability of this instance of the minmax or minsum Euclidean $k$-center
%problem can be determined in $O(n^{6k+1})$ time.
%\end{corollary}
%\begin{proof}
%We run the unstable and stable algorithms to obtain two functions that map time to a cost.
%The maximum ratio of these gives the topological stability of the instance.
% can be determined easily.
%Note that we need to check the times with a topologically stable transition only;
%at other times the ratio is $1$.
%\end{proof}

\section{Conclusion}

We considered the topological stability of common variants of the kinetic $k$-center problem, in which solutions must change continuously but may do so arbitrarily fast.
We have established tight bounds for the minsum case (Euclidean and rectilinear) as well as for the rectilinear minmax case for any $k \geq 2$.
For the Euclidean minmax case, we proved nontrivial upper bounds for small values of $k$ and presented a general lower bound tending towards $2$ for large values of $k$.
We also presented algorithms to compute topologically stable solutions together with the cost of stability for a set of moving points, that is, the growth factor that we need for that particular set of moving points at any point in time. A practical application of these algorithms would be to identify points in time where we could slow down the solution to explicitly show stable transitions between optimal solutions.

\subparagraph{Future work}
It remains open whether a general upper bound strictly below 2 is achievable for $k$-EC-minmax.
We conjecture that this bound is indeed smaller than 2 for any constant $k$, since this is clearly the case for configurations where where the intersection graph forms a tree or cycle.
What remains is to acquire more insight in how to resolve an intersection graph with more general structures, in particular how combinatorial properties can help resolve high degree nodes.
Our algorithms to compute the cost of stability for an instance have high (albeit polynomial) run-time complexity. Can the results for KDS (e.g. \cite{deberg2013kinetic}) help us speed up these algorithms? Alternatively, can we approximate the cost of stability more efficiently?

%We considered topological stability in this paper, but for typical applications
Lipschitz stability requires a bound on the speed at which a solution may change \cite{meulemans2017framework}.
This stability for $k > 2$ is unbounded, if centers have to move continuously \cite{durocher2006geometric}, as Figure~\ref{fig:lipschitz} shows in the introduction.
A potentially interesting variant of the solution space topology, is one where a disk may shrink to radius 0, at which point it disappears and may reappear at another location.
This alleviates the problem in the example; would it allow us to bound the Lipschitz stability?

\clearpage
\bibliography{references}

\begin{thebibliography}{10}

\bibitem{basch1999data}
Julien Basch, Leonidas~J. Guibas, and John Hershberger.
\newblock Data structures for mobile data.
\newblock {\em Journal of Algorithms}, 31(1):1--28, 1999.

\bibitem{DBLP:conf/dialm/BespamyatnikhBKS00}
Sergei Bespamyatnikh, Binay Bhattacharya, David Kirkpatrick, and Michael Segal.
\newblock Mobile facility location.
\newblock In {\em Proc. 4th Workshop on Discrete Algorithms and Methods for
  Mobile Computing and Communications}, pages 46--53, 2000.

\bibitem{chan1999planar}
Timothy Chan.
\newblock More planar two-center algorithms.
\newblock {\em Computational Geometry}, 13(3):189--198, 1999.

\bibitem{deberg2013kinetic}
Mark de~Berg, Marcel Roeloffzen, and Bettina Speckmann.
\newblock Kinetic 2-centers in the black-box model.
\newblock In {\em Proc. 29th Symposium on Computational Geometry}, pages
  145--154, 2013.

\bibitem{degener2010kinetic}
Bastian Degener, Joachim Gehweiler, and Christiane Lammersen.
\newblock Kinetic facility location.
\newblock {\em Algorithmica}, 57(3):562--584, 2010.

\bibitem{drezner1984p}
Zvi Drezner.
\newblock The p-centre problem—heuristic and optimal algorithms.
\newblock {\em Journal of the Operational Research Society}, 35(8):741--748,
  1984.

\bibitem{drezner1987rectangular}
Zvi Drezner.
\newblock On the rectangular p-center problem.
\newblock {\em Naval Research Logistics}, 34(2):229--234, 1987.

\bibitem{durocher2006geometric}
Stephane Durocher.
\newblock {\em Geometric Facility Location under Continuous Motion}.
\newblock PhD thesis, University of British Columbia, 2006.

\bibitem{durocher2008bounded}
Stephane Durocher and David Kirkpatrick.
\newblock Bounded-velocity approximation of mobile {E}uclidean 2-centres.
\newblock {\em International Journal of Computational Geometry \&
  Applications}, 18(03):161--183, 2008.

\bibitem{friedler2010approximation}
Sorelle Friedler and David Mount.
\newblock Approximation algorithm for the kinetic robust k-center problem.
\newblock {\em Computational Geometry}, 43(6-7):572--586, 2010.

\bibitem{gao2003discrete}
Jie Gao, Leonidas Guibas, John Hershberger, Li~Zhang, and An~Zhu.
\newblock Discrete mobile centers.
\newblock {\em Discrete {\&} Computational Geometry}, 30(1):45--63, 2003.

\bibitem{gao2006deformable}
Jie Gao, Leonidas. Guibas, and An~Nguyen.
\newblock Deformable spanners and applications.
\newblock {\em Computational Geometry: Theory and Applications}, 35(1-2):2--19,
  2006.

\bibitem{hershberger89}
John Hershberger.
\newblock Finding the upper envelope of n line segments in $o(n \log n)$ time.
\newblock {\em Information Processing Letters}, 33(4):169--174, 1989.

\bibitem{hoffmann1999simple}
Michael Hoffmann.
\newblock A simple linear algorithm for computing rectangular 3-centers.
\newblock In {\em Proc. 11th Canadian Conference on Computational Geometry},
  pages 72--75, 1999.

\bibitem{hwang93}
R.Z. Hwang, R.C.T. Lee, and R.C. Chang.
\newblock The slab dividing approach to solve the {E}uclidean p-center problem.
\newblock {\em Algorithmica}, 9:1–22, 1993.

\bibitem{meggido1984complexity}
Nimrod Megiddo and Kenneth Supowit.
\newblock On the complexity of some common geometric location problems.
\newblock {\em {SIAM} Journal on Computing}, 13(1):182--196, 1984.

\bibitem{meulemans2017framework}
Wouter Meulemans, Bettina Speckmann, Kevin Verbeek, and Jules Wulms.
\newblock A framework for algorithm stability.
\newblock In {\em Proc. 13th Latin American Symposium on Theoretical
  Informatics}, LNCS 10807, pages 805--819, 2018.

\bibitem{nussbaum1997rectilinear}
Doron Nussbaum.
\newblock Rectilinear p-piercing problems.
\newblock In {\em Proc. 1997 Symposium on Symbolic and Algebraic Computation},
  pages 316--323, 1997.

\bibitem{segal1997piercing}
Michael Segal.
\newblock On piercing sets of axis-parallel rectangles and rings.
\newblock In {\em Proc. 5th European Symposium on Algorithms}, pages 430--442,
  1997.

\bibitem{Sharir1995davenport}
Micha Sharir and Pankaj Agarwal.
\newblock {\em Davenport-Schinzel sequences and their geometric applications}.
\newblock Cambridge University Press, 1995.

\bibitem{sharir1996rectilinear}
Micha Sharir and Emo Welzl.
\newblock Rectilinear and polygonal \emph{p}-piercing and \emph{p}-center
  problems.
\newblock In {\em Proc. 12th Symposium on Computational Geometry}, pages
  122--132, 1996.

\bibitem{sylvester1857question}
James Sylvester.
\newblock A question in the geometry of situation.
\newblock {\em Quarterly Journal of Mathematics}, 1:79, 1857.

\end{thebibliography}

\end{document}